\newif\ifConference
\newif\ifJournal
\newif\ifArXiv

\ArXivtrue

\ifConference
	
	\documentclass[runningheads]{llncs}

\usepackage[utf8]{inputenc}
\usepackage{amsmath}
\usepackage{amssymb}
\usepackage{amsthm}
\usepackage{enumerate}
\usepackage{mathtools}
\usepackage{graphicx}
\usepackage{tikz}
\usetikzlibrary {arrows.meta,backgrounds,calc,decorations.markings,decorations.pathreplacing,fit,graphs}
\usepackage{nicefrac}
\usepackage{tabularx}
\usepackage{xspace}
\usepackage[T1]{fontenc}
\usepackage{hyperref}
\usepackage{paralist}
\usepackage{booktabs}
\hypersetup{
	colorlinks,
	linkcolor={red!50!black},
	citecolor={blue!90!black!80},
	urlcolor={blue!80!black}
}
\usepackage{cleveref}

\usepackage[disable]{todonotes}
\usepackage{lineno}

{\upshape\itshape}{\upshape\rmfamily}

{\upshape\itshape}{\upshape\rmfamily}
\ifArXiv
\else
\newtheorem{observation}{Observation}{\upshape\itshape}{\upshape\rmfamily}
\fi
{\upshape\itshape}{\upshape\rmfamily}
\newtheorem{rrule}{Reduction Rule}{\upshape\itshape}{\upshape\rmfamily}
\ifArXiv
\crefname{section}{Section}{Sections}
\crefname{lemma}{Lemma}{Lemmata}
\crefname{observation}{Observation}{Observations}
\crefname{theorem}{Theorem}{Theorems}
\crefname{rrule}{Rule}{Rules}
\crefname{figure}{Figure}{Figures}
\else
\crefname{section}{Sec.}{Secs.}
\crefname{lemma}{Lem.}{Lems.}
\crefname{observation}{Obs.}{Obs.}
\crefname{theorem}{Thm.}{Thms.}
\crefname{rrule}{Rule}{Rules}
\crefname{figure}{Fig.}{Figs.}
\fi

\newcommand{\kommentar}[1]{}
\newcommand{\Oh}{\ensuremath{\mathcal{O}}}

\DeclareMathOperator{\off}{off}

\newcommand{\Dbar}{{\ensuremath{\overline{D}}}\xspace}
\newcommand{\pDbar}{{\ensuremath{\overline{D'}}}\xspace}
\newcommand{\kbar}{{\ensuremath{\overline{k}}}\xspace}
\newcommand{\pkbar}{{\ensuremath{\overline{k'}}}\xspace}
\newcommand{\w}{{\ensuremath{\omega}}\xspace}

\newcommand{\predatorsF}[2]{{\ensuremath{N^{>}_{#2}(#1)}}\xspace}
\newcommand{\preyF}[2]{{\ensuremath{N^{<}_{#2}(#1)}}\xspace}

\newcommand{\prey}[1]{\preyF{#1}{}}

\newcommand{\yes}{{\normalfont\texttt{yes}}\xspace}

\newcommand{\Wh}[1]{{\normalfont W[#1]}\xspace}
\newcommand{\NP}{{\normalfont{NP}}\xspace}

\newcommand{\FPT}{{\normalfont{FPT}}\xspace}

\newcommand{\XP}{{\normalfont{XP}}\xspace}
\newcommand{\NPcoNPpoly}{{\normalfont{NP~$\not\subseteq$~coNP/poly}}\xspace}

\newcommand{\Instance}{{\ensuremath{\mathcal{I}}}\xspace}

\newcommand{\Tree}{{\ensuremath{\mathcal{T}}}\xspace}
\newcommand{\Food}{{\ensuremath{\mathcal{F}}}\xspace}

\newcommand{\PD}{\PDsub\Tree}
\newcommand{\PDsub}[1]{{\ensuremath{PD_{#1}}}\xspace}

\newcommand{\A}{\mathcal{A}}

\newcommand{\NN}{\mathbb{N}}

\newcommand{\problemdef}[3]{
  \begin{trivlist}
    \item \tikz{\node [draw,inner sep=4.5pt] {\begin{minipage}{\textwidth-9.4pt}
          \normalsize\textsc{#1}
          
          \smallskip

          \begin{tabularx}{\textwidth-9.4pt}{ll@{\hspace{3pt}}>{\raggedright}X}
            \normalsize\textbf{Input} & :	& \normalsize#2 \cr
            \normalsize\textbf{Question} & :				& \normalsize#3
         \end{tabularx}
    \end{minipage}};}
  \end{trivlist}}

\newcommand{\PROB}[1]{{{\textsc{#1}}}\xspace}

\newcommand{\MPDlong}{\PROB{Maximize Phylogenetic Diversity}}

\newcommand{\CP}{\PROB{Clique}}

	\usepackage{hyperref}
	\author{
			Niels Holtgrefe${}^{1,}$
			\thanks{
				Funded by the Dutch Research Council (NWO), grant OCENW.M.21.306.
			},
			Jannik Schestag${}^{1,}$
			\thanks{
				Funded by NWO, grant OCENW.GROOT.2019.015.
			}, and
			Norbert Zeh${}^{2,}$
			\thanks{
				Research supported by NSERC Discovery Grant number RGPIN-2025-06235.
			}
	}
	\authorrunning{Holtgrefe, Schestag, and Zeh}
	\institute{
			${}^{1}$ TU Delft, Delft, The Netherlands.
			${}^{2}$ Dalhousie University, Halifax, Canada.
	}
\else
	
	\documentclass[a4paper,USenglish,cleveref,autoref,thm-restate,numberwithinsect,nameinlink]{lipics-v2021}

	\author{Niels Holtgrefe}
	{Delft University of Technology, Delft, The Netherlands}
	{n.a.l.holtgrefe@tudelft.nl}
	{https://orcid.org/0009-0001-6162-9668}
	{Funded by the Dutch Research Council (NWO), grant OCENW.M.21.306.}
	\author{Jannik Schestag}
	{Delft University of Technology, Delft, The Netherlands}
	{j.t.schestag@tudelft.nl}
	{https://orcid.org/0000-0001-7767-2970}
	{Funded by NWO, grant OCENW.GROOT.2019.015,
		``Optimization for and with Machine Learning (OPTIMAL).''}
	\author{Norbert Zeh}
	{Dalhousie University, Halifax, Canada}
	{nzeh@cs.dal.ca}
	{https://orcid.org/0000-0002-0562-1629}
	{Research supported by NSERC Discovery Grant number RGPIN-2025-06235.}
	
	\keywords{phylogenetic diversity; food-webs; structural parameterization; algorithmic lower bounds}
	
	\authorrunning{Holtgrefe, Schestag, and Zeh}
	
	\ccsdesc{Applied computing~Computational biology}
	\ccsdesc{Theory of computation~Fixed parameter tractability}
	
	\hideLIPIcs
\fi

\usetikzlibrary{shadows}

\newcommand{\PDDlong}{\PROB{Optimizing PD with Dependencies}}

\newcommand{\PDD}{\PROB{\mbox{$\varepsilon$-PDD}}}
\newcommand{\sPDD}{\PROB{\mbox{$\varepsilon$-PDD$_{\text{s}}$}}}

\newcommand{\fPDD}{\PROB{\mbox{$1$-PDD}}}
\newcommand{\fsPDD}{\PROB{\mbox{$1$-PDD$_{\text{s}}$}}}

\newcommand{\hPDD}{\PROB{\mbox{$\nicefrac{1}{2}$-PDD}}}
\newcommand{\hsPDD}{\PROB{\mbox{$\nicefrac{1}{2}$-PDD$_{\text{s}}$}}}

\newcommand{\aPDDlong}{\PROB{Optimizing PD with $\alpha$-Dependencies}}
\newcommand{\aPDD}{\PROB{\mbox{$\alpha$-PDD}}}
\newcommand{\asPDD}{\PROB{\mbox{$\alpha$-PDD$_{\text{s}}$}}}

\newcommand{\wPDD}{\PROB{Weighted PDD}}

\newcommand{\rwPDD}{\PROB{\mbox{rw-PDD}}}

\newcommand{\viable}{{$\varepsilon$-viable}\xspace}
\newcommand{\fviable}{{$1$-viable}\xspace}

\newcommand{\aviable}{{$\alpha$-viable}\xspace}

\DeclareMathOperator{\vc}{vc}

\newcommand{\todos}[2][]{\todo[#1,color=red!25!green!50]{ #2}}
\newcommand{\todosi}[2][]{\todo[inline,color=red!25!green!50]{ #2}}

\usepackage{hyphenat}
\allowdisplaybreaks

\title{Limits of Kernelization and Parametrization for Phylogenetic Diversity with Dependencies
	\footnote{This paper was accepted for an oral presentation at the 17th Latin American Theoretical Informatics Symposium (LATIN 2026) in Florianópolis, Brazil, April 2026.}}

\titlerunning{Kernelized and Parameterized Complexity of \aPDD}

\nolinenumbers

\begin{document}

\maketitle
\todos{I suggest kicking the unnecessary line of electronic mail for the proceedings.}

\begin{abstract}
	In the \MPDlong problem, we are given a phylogenetic tree that represents the genetic proximity of species, and we are asked to select a subset of species of maximum phylogenetic diversity to be preserved through conservation efforts, subject to budgetary constraints that allow only $k$~species to be saved.
	This neglects that it is futile to preserve a predatory species if we do not also preserve at least a subset of the prey it feeds on.
	Thus, in the \PDDlong (\PDD) problem, we are additionally given a food web that represents the predator-prey relationships between species.
	The goal is to save a set of $k$~species of maximum phylogenetic diversity such that for every saved species, at least one of its prey is also saved.
	This problem is \NP-hard even when the phylogenetic tree is a star.
	
	The \aPDD problem alters \PDD by requiring that at least some fraction $\alpha$ of the prey of every saved species are also saved.
	In this paper, we study the parameterized complexity of \aPDD.
	We prove that the problem is \Wh{1}-hard and in~\XP when parameterized by the solution size~$k$, the diversity threshold~$D$, or their complements.
	When parameterized by the vertex cover number of the food web, \aPDD is fixed-parameter tractable (\FPT).
	A~key measure of the computational difficulty of a problem that is \FPT is the size of the smallest kernel that can be obtained.\todo{JS: Version suggested by Niels which leaves out the fact that it is parameterized. I am fine with that. The previous version is commented out.}
	We prove that, when parameterized by the distance to clique, \fPDD admits a linear kernel.
	Our main contribution is to prove that \aPDD does not admit a polynomial kernel when parameterized by the vertex cover number plus the diversity threshold~$D$, even if the phylogenetic tree is a star.
	This implies the non-existence of a polynomial kernel for \aPDD also when parameterized by a range of structural parameters of the food web, such as its distance to cluster, treewidth, pathwidth, feedback vertex set, and others.
\end{abstract}

\section{Introduction}

Humanity is living through what biologists increasingly call the \emph{sixth
mass extinction}~\cite{barnosky2011has,cowie2022sixth}, with species
disappearing at an unprecedented pace. Beyond the tragic immediate loss of
biodiversity, the decline of entire lineages undermines ecological stability
and the evolutionary potential of life itself~\cite{cardinale2012biodiversity}.
Given that an insufficient amount of resources is given to preserving
all species, this forces conservationists to prioritize: Which
subset of species should be preserved to maximize genetic and ecological value?
A key framework for such prioritization is \emph{phylogenetic diversity}
(PD)~\cite{FAITH1992} has become a central concept in biodiversity assessment,
as it is for example the basis of the ``EDGE of existence''
programme of the Zoological Society of London~\cite{EDGE}, and has influenced
policy discussions on evolutionary distinctiveness and ecosystem
resilience~\cite{vellend2011measuring}.

Given a phylogenetic ``tree of life'' of all species, the phylogenetic
diversity of any subset of species is defined as the total length of
all edges on paths from the root of the tree to species in this subset. A
high-PD subset of species is expected to preserve a wide range of functional
and genetic traits~\cite{faith2016pd,karanth2019phylogenetic}.

Given a phylogenetic tree, a set of $k$ species that maximize PD can be selected using a
simple greedy strategy~\cite{FAITH1992,Pardi2005,steel}. This combination
of algorithmic tractability and biological relevance spurred
generalizations that attempt to model real-world ecological and temporal
constraints more faithfully~\cite{hartmann,TimePD,GNAP,pardi07}. These variants
address, among others, heterogeneous conservation costs~\cite{GNAP,pardi07},
spatially restricted habitats~\cite{bordewich2012budgeted}, or the presence of
reticulate
events~\cite{bordewichNetworks,MAPPD,van2024maximizing,van2025average,WickeFischer2018}.

Species do not exist in isolation though: Their survival often depends on prey,
hosts, or mutualistic partners. Conserving a predator without maintaining
sufficient prey populations is therefore futile~\cite{pimm1982food}. This
insight led to the introduction of the \PDDlong~(\PDD) problem~\cite{moulton},
which augments the classical model with a (directed) ecological network
encoding, for example, food-web interactions and which stipulates that a viable
selection of saved species contains at least one prey for each predator that is saved.

In a solution of \PDD that preserves only one prey per predator, the extinction
of a single species can trigger an entire extinction cascade of species that
transitively depend on its survival.\todo{JS: I question the scientific soundness of this argument. we always assume that we can preserve any species at 100\% certainty. Also in our reductions and algorithms we heavily depend on that. Thus, I would argue to remove it. NH: I dont see the problem.}
Moreover, not all prey contribute equally
to a predator's survival. Weighted food webs were introduced in
\cite{lieberman,scotti,yang2024analysis} as models that support more robust
solutions that preserve more than one prey per predator. This leads to weighted
versions of \PDD \cite{WeightedFW}, which stipulate that at least some fraction
of the prey of each preserved predator are also preserved.

\PDD is \NP-hard~\cite{faller,spillner}.  Thus, it is interesting to study \PDD
and its weighted variants through the lens of parameterized complexity theory,
which provides a framework for analyzing which aspects of a problem drive its
computational hardness~\cite{cygan,downeybook} and for developing efficient
algorithms for many problems in spite of their classical intractability.

\todosi{A comment which I would suggest to consider for the journal version, only.\\
In Page 4, the definition of "kernelization" is given formally. One or two sentences (in the introduction or another appropriate place such as related work) that give an intuitive explanation could motivate the reader as to why we are interested in a polynomial kernel.}

\paragraph*{Our contribution.}

We extend the work of~\cite{WeightedFW} by analyzing the limits of
parameterized algorithms and kernelizations for maximization of PD with food
webs. Specifically, we introduce \aPDD, which requires that at least an
$\alpha$-fraction of the prey of any preserved predator must also be preserved.
This generalizes the previously studied \fPDD and \hPDD
problems~\cite{MAPPDviability,WeightedFW} (for~$\alpha=1$ and
for~$\alpha=\nicefrac12$), as well as \PDD \cite{moulton} (for $\alpha$ approaching
zero). In \Cref{sec:k}, we prove that \fPDD is \Wh{1}-hard with respect
to both the diversity threshold~$D$ and its complement parameter~$\Dbar$, even when
the food web is directed bipartite. In our main result, in \Cref{sec:kernels},
we prove that \fPDD
admits no polynomial kernel when parameterized by the vertex cover number
plus~$D$, unless \NPcoNPpoly. This implies the non-existence of
polynomial kernels for other structural parameters, also, including treewidth,
distance to (co-)cluster, and more. In \cref{sec:reductions}, we provide
reductions from \PDD and \fPDD to \aPDD, for any fixed $\alpha \in (0,1]$,
allowing hardness results of \cref{sec:k,sec:kernels} to extend to \aPDD,
for any constant $\alpha \in (0,1]$. We complement these non-existence results
by showing that \fPDD admits a linear kernel parameterized by the distance of
the food web to a clique.

\section{Preliminaries}

\label{sec:prelims}

\subsection{Definitions}

\paragraph*{Graphs.}

We use standard graph terminology as in~\cite{diestel2025graph}. A directed
graph~$G=(V,E)$ is \emph{directed bipartite} if each vertex has only outgoing or only incoming edges.
Given a directed, acyclic graph with positive edge-weights~$\omega$, the \emph{suppression} of a degree-$2$ vertex~$v$ with incident edges~$uv$ and~$vw$ replaces $v$, $uv$, and $vw$ by an edge $uw$ of weight $\omega(uv)+\omega(vw)$.

\paragraph*{Phylogenetic trees and phylogenetic diversity.}

\looseness=-1
A directed tree~$(V,E)$ with an \emph{edge weight} function~\mbox{$\w: E \to \mathbb{N}_{>0}$}
is a \emph{(phylogenetic)~$X$-tree~$\Tree=(V,E,\w)$} if
\begin{inparaenum}[(i)]
	\item there is a unique vertex of in-degree $0$, the \emph{root}~$\rho$;
	\item each \emph{leaf} has in-degree~$1$ and out-degree~$0$;
	\item the leaves are labeled bijectively with the elements of a set~$X$ of \emph{taxa} (e.g., species); and
	\item the remaining vertices have in-degree $1$ and out-degree at least $2$.
\end{inparaenum}
A tree is a \emph{star} if every non-root vertex is a leaf.

\looseness=-1
For a phylogenetic $X$-tree $\Tree$ and a set $S \subseteq X$, the set
of edges on the paths from $\rho$ to the leaves in~$S$ is denoted as
$E_{\Tree}(S)$. The \emph{phylogenetic diversity}~$\PD(S)$ of
$S$ is defined as
\begin{equation}
  \PD(S) := \sum_{\mathclap{e \in E_{\Tree}(S)}} \w(e).
  \label{eqn:PDdef} 
\end{equation}

\looseness=-1
For any edge $e \in \Tree$, let $\off(e)$ be the set of leaves $x$ such that $e$
belongs to the path from the root of~$\Tree$ to~$x$. Then the \emph{Some-$A$-} and
\emph{All-$A$-contractions} of~$\Tree$ are $(X \setminus A)$-trees obtained
from~$\Tree$ by removing all edges $e \in \Tree$ such that~$\off(e) \cap A \ne \emptyset$ and~$\off(e) \subseteq A$, respectively,
identifying all in-degree-$0$ vertices with each other, and suppressing all degree-$2$ vertices.

\paragraph*{Food webs.}

A \emph{food web~$\Food=(X,E_\Food)$} on a set of taxa~$X$ is a
directed, acyclic graph with an optional edge weight function~$\gamma: E_\Food
\to (0,1]$.
A taxon $x$ is a \emph{prey} of a taxon $y$ and $y$ is a \emph{predator} of $x$, if~$xy \in E$.
The sets of prey and predators of~$x$ are denoted as
$\preyF{x}{\Food}$ and $\predatorsF{x}{\Food}$, respectively. We omit the
subscript if the food web is contextually clear.
Taxa without prey are \emph{sources}. The set of taxa that can reach a taxon
$x\in X$ in \Food is denoted as $X_{\le x}$, and $X_{\ge x}$ is the set of taxa that
can be reached from $x$ in~\Food.

For any~$\alpha \in (0,1]$, a set of taxa~$A\subseteq X$ is \emph{\aviable}
if~$|A\cap \prey{x}| \ge \alpha \cdot |\prey{x}|$, for all~$x\in A$. A set of
taxa~$A\subseteq X$ is \emph{\viable} if~$A\cap \prey{x}$ is non-empty, for
each non-source~$x\in A$.
If $\Food$ has edge weights~$\gamma$, a set of taxa $A$ is~\emph{$\gamma$-viable}
if~$\sum_{uv \in E_\Food:\, u \in A} \gamma(uv) \ge 1$, for each
non-source~$v\in A$.

\paragraph*{Phylogenetic diversity with dependencies.}

\looseness=-1
The main problem we study is this:

\problemdef{\aPDDlong (\aPDD)}{
  A phylogenetic~$X$-tree~$\Tree$, a food-web~$\Food$ on $X$, and integers~$k$
  and~$D$.
}{
  Is there an \aviable set~$S\subseteq X$ such that~$|S|\le k$ and
  \mbox{$\PD(S)\ge D$}?
}

\looseness=-1
In the \PDD problem, we instead ask for the existence of an \viable set. In
\sPDD and \asPDD, the input of \PDD and \aPDD restricts $\Tree$
to be a star.

\problemdef{\wPDD}{
  A phylogenetic~$X$-tree~$\Tree$, a \emph{weighted} food web~$\Food$ on $X$
    with an edge-weight function~$\gamma$, and integers~$k$ and~$D$.
}{
  Is there a~$\gamma$-viable set~$S\subseteq X$ such that $|S| \le k$ and
  \mbox{$\PD(S)\ge D$}?
}

Observe that \PDD and \aPDD are special cases of \wPDD where $\gamma (uv)$ equals
1 and $(\alpha \cdot \deg^-(v))^{-1}$, respectively, for each $uv \in E$.

\paragraph*{Parameterized complexity.}

\looseness=-1
We study \PDD and \aPDD from a parametrized perspective.
\ifConference
  See \cite{cygan} for a detailed introduction.
A parameterized problem~$\Pi$ is
\emph{fixed-parameter tractable} (\FPT), respectively \XP, if an algorithm $\A$ that
solves $\Pi$ on any input $(\sigma,k) \in \Sigma^* \times \NN$ in
$f(k) \cdot p(|\sigma|)$ time, respectively in $|\sigma|^{p(k)}$ time, exists,
where $f(\cdot)$ is a computable function and $p(\cdot)$ is a polynomial.
\Wh{1}-hard problems \cite{downeybook} are believed not to be~\FPT.

\looseness=-1
A \emph{kernelization} of a parameterized decision problem $\Pi$ is an
algorithm $\A$ which, for any instance $(\sigma,k) \in \Sigma^* \times \NN$ of $\Pi$,
in time $|\sigma|^{\Oh(1)}$, constructs another instance
$(\tau,\ell)$ of $\Pi$ such that
$\ell \le k$, $|\tau| \le f(k)$, for some computable~$f(\cdot)$, and $(\sigma,k)$ is a yes-instance if and only if $(\tau,\ell)$ is a yes-instance.
The \emph{size} of a kernelization is polynomial, respectively linear, if $f(\cdot)$ is a polynomial, respectively linear.
\else
There
exists a natural bijection between decision problems and formal languages over
some alphabet $\Sigma$.  A \emph{parameterized language} is subset $L \subseteq
\Sigma^* \times \NN$.  For a pair $(\sigma,k) \in L$, $k$ is called the
\emph{parameter} of this pair.  A parameterized language is
\emph{fixed-parameter tractable} (\FPT) if there exists an algorithm $\A$ that
decides for any input $(\sigma,k) \in \Sigma^* \times \NN$ whether $(\sigma,k)
\in L$, and does so in time $f(k) \cdot p(|\sigma|)$, where
$f(\cdot)$ is some computable function and $p(\cdot)$ is some polynomial.  A
parameterized language is \emph{slicewise polynomial} (\XP) if there exists an
algorithm $\A$ that decides for any input $(\sigma,k) \in \Sigma^* \times \NN$
whether $(\sigma,k) \in L$, and does so in time at most $|\sigma|^{p(k)}$,
where $p(\cdot)$ is some polynomial.  Every problem that is \FPT is also in \XP.
$\Wh{h}$-hard problems, for $h \ge 1$, are unlikely to be \FPT.
See~\cite{cygan,downeybook} for more detailed background on parameterized
complexity.

A \emph{compression} from a parameterized language $P \subseteq \Sigma^* \times
\NN$ to a parameterized language~$Q \subseteq \Sigma^* \times \NN$ is an
algorithm $\A$ which, given a pair $(\sigma,k) \in \Sigma^* \times \NN$,
constructs another pair~$(\tau,\ell) \in \Sigma^* \times \NN$ such that $\ell
\le k$, $|\tau| \le f(k)$, for some computable function $f(\cdot)$,
and~$(\sigma,k) \in P$ if and only if $(\tau,\ell) \in Q$.  Moreover, $\A$~must
accomplish this construction in time polynomial in $|\sigma|$.  If $P = Q$,
then we call $\A$ a \emph{kernelization}.  We say that $\A$ is a
\emph{polynomial} compression or kernel if $f(\cdot)$ is a polynomial.

\fi

The main parameters we consider are the \emph{solution size}~$k$, the
\emph{diversity threshold}~$D$, their complement parameters---the
\emph{species loss}~$\kbar := |X| - k$ and the \emph{acceptable loss of
diversity}~$\Dbar := \PD (X) - D$---, the \emph{vertex cover number}~$\vc$ of
$\Food$ (i.e., the size of a smallest subset of vertices that contains at least
one endpoint of every edge), and the \emph{distance to clique}~$d$ of $\Food$
(i.e., the minimum number of vertices to be deleted to make $\Food$ a clique).

\paragraph*{OR-cross compositions.}

Cross compositions \cite[Chapter~15]{cygan} are the standard framework for
proving that a parameterized problem $\Pi$ does not admit a polynomial kernel.
A \emph{polynomial equivalence relation} on some alphabet $\Sigma^*$ is an
equivalence relation $\sim$ such that, for polynomials $p(\cdot)$ and $q(\cdot)$,
\begin{enumerate}[(i)]
	\item~$\sim$~partitions $\Sigma^{\le n}$ into $p(n)$ equivalence classes, for
	each~$n \in \NN_{>0}$, and
	\item~It takes $q(|\sigma_1| + |\sigma_2|)$
	time to decide whether $\sigma_1 \sim \sigma_2$, for any $\sigma_1, \sigma_2 \in \Sigma^*$.
\end{enumerate}
A decision problem $\Pi_1$ \emph{(OR-)cross-composes} into a parameterized decision problem
$\Pi_2$ if
a polynomial equivalence relation~$\sim$ on~$\Sigma^*$ and an
algorithm $\A$ exist that, given $t$ equivalent (under~$\sim$) instances
$\Instance_1, \dots, \Instance_t$ of $\Pi_1$, constructs, in~$(\sum_{i=1}^t
|\sigma_i|)^{\Oh(1)}$~time, an input~$(\Instance,k)$ of~$\Pi_2$ such that
\begin{enumerate}[(i)]
	\item $k \le p(\max \{ |\sigma_1|, \ldots, |\sigma_t| \} + \log t)$, for some polynomial~$p(\cdot)$, and
	\item $(\Instance,k)$ is a \yes-instance of $\Pi_2$ if and only if $\Instance_i$ is a \yes-instance of $\Pi_1$, for some~$i \in [t]$.
\end{enumerate}

\begin{theorem}[{\cite[Theorem~15.9]{cygan}}]
  If an \NP-hard problem $\Pi_1$ cross-composes into a parameterized problem $\Pi_2$,
  then $\Pi_2$ does not admit a kernelization of polynomial size, assuming \NPcoNPpoly.
\end{theorem}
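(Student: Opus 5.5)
The plan is to derive a contradiction with the OR-distillation lemma of Fortnow and Santhanam, in the form given in \cite{cygan}. That lemma says the following. Suppose $L$ is \NP-hard and $R$ is any language. Suppose further there is a polynomial-time algorithm that maps any tuple $(x_1,\dots,x_t)$ of $L$-instances, each of size at most $n$, to one $R$-instance $y$ with $|y| \le \mathrm{poly}(n)$, where the bound does not depend on $t$, and such that $y \in R$ iff some $x_i \in L$. Then $\NP \subseteq \mathrm{coNP}/\mathrm{poly}$. I will assume this lemma. So suppose, for contradiction, that $\Pi_2$ has a kernelization of polynomial size $f(k)$, and let $\Pi_1$ be the \NP-hard problem that cross-composes into $\Pi_2$ via $(\sim,\A)$. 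The goal is to build such a distillation from $\Pi_1$ into the language $\mathrm{OR}(\Pi_2)$, consisting of tuples of $\Pi_2$-instances at least one of which is a \yes-instance.

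\textbf{Step 1: bound $t$.} Let $x_1,\dots,x_t$ be given, each of size at most $n$. First delete duplicate instances; this does not change the OR. Afterwards $t \le |\Sigma|^{n+1}$, so $\log t = \Oh(n)$.

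\textbf{Step 2: group by equivalence class.} Partition the instances into the classes of $\sim$. By property (i) of a polynomial equivalence relation there are at most $p(n)$ classes. By property (ii), computing the partition takes $\mathrm{poly}(t\cdot n)$ time, via pairwise tests.

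\textbf{Step 3: compose and kernelize each class.} On each class, run the cross-composition $\A$. This yields an instance $(\Instance_j,k_j)$ of $\Pi_2$ with $k_j \le p'(n+\log t) = \mathrm{poly}(n)$. The instance is a \yes-instance iff some member of the class is a \yes-instance of $\Pi_1$. Next apply the assumed kernelization, which gives an equivalent instance of size $f(k_j) = \mathrm{poly}(n)$. Both algorithms run in time polynomial in $\sum_i |x_i|$.

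\textbf{Step 4: assemble the output.} Output the tuple of the at most $p(n)$ kernelized instances as one instance of $\mathrm{OR}(\Pi_2)$. Its size is $p(n)\cdot\mathrm{poly}(n) = \mathrm{poly}(n)$, and it lies in $\mathrm{OR}(\Pi_2)$ iff some $x_i \in \Pi_1$. This is exactly an OR-distillation with output size independent of $t$, so the lemma gives $\NP \subseteq \mathrm{coNP}/\mathrm{poly}$, a contradiction.

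The main obstacle I expect is bookkeeping around the parameter bound $k \le p(\max|\sigma_i|+\log t)$. The duplicate-removal step in Step 1 is what keeps $\log t$ in $\Oh(n)$, and hence keeps the output size polynomial in $n$ alone. A second detail is malformed strings. I would handle these by letting $\sim$ place them in one extra class, which Step 3 maps to a trivial \texttt{no}-instance, so they do not affect the OR. Everything else is direct composition of polynomial-time maps.
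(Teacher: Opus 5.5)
Your proposal is correct and is the standard proof of the cited result \cite[Theorem~15.9]{cygan}, which the paper uses without proof: you deduplicate so that $\log t = \Oh(n)$, split the input into at most $p(n)$ classes of $\sim$, cross-compose and then kernelize each class, and output the tuple of the resulting $\mathrm{poly}(n)$-size instances as an OR-distillation of the \NP-hard $\Pi_1$ into $\mathrm{OR}(\Pi_2)$, which the Fortnow--Santhanam lemma rules out under \NPcoNPpoly. Your handling of malformed strings also works; a simpler option is to let the distillation discard them (they are not in $\Pi_1$) and output a fixed no-instance if nothing remains.
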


\todo{JS: In \Cref{thm:noKernel}, we use \Cref{lem:1toAlpha}\ref{itm:fewAddedVertices} as a subroutine. This however only works because it is a PPT. We have nowhere defined these. I am fine to have it as implicit right now as it is, but would definitely appreciate a definition for a journal version. NZ: Yes, I thought about this when going through the proof.  I think we may be able to handle this without introducing what a PPT is and simply pointing out in the proof that the parameter does not go up.}

\subsection{Related Work}

\ifJournal
Only a few years after proving that \MPDlong can be solved with a greedy
algorithm~\cite{FAITH1992,steel,Pardi2005},
\fi
Moulton et al.~\cite{moulton}
defined \PDD and provided efficient algorithms under various
assumptions on the input.
The conjecture that \PDD is \NP-hard~\cite{spillner} was proven by Faller et
al.~\cite{faller}, even for directed trees as food webs.  They also proved
that \sPDD is \NP-hard if the food web is a bipartite graph, and can be
solved in polynomial time if the food web is a directed tree~\cite{faller}.
\PDD is \FPT when parameterized by the solution size plus the phylogenetic tree's
height, and~\sPDD is \FPT when parameterized by the food web's treewidth~\cite{PDD}.

In \cite{WeightedFW}, \PDD was extended to require species to depend on a larger share
of their prey. For a large number of structural parameters of the food web, \fsPDD is \FPT, and \hsPDD admits \XP-algorithms~\cite{WeightedFW}.
In~\cite{MAPPDviability}, a combination of viability with phylogenetic
diversity on phylogenetic \emph{networks} has been introduced, and a complete
complexity dichotomy was presented.

\subsection{Preliminary Observations}

We end this section with two observations.

\begin{observation}
	\label{obs:fviable}
	Given a food web~\Food and a set~$S \subseteq X$, then $S$ is \fviable in
	\Food if and only if $X_{\le x} \subseteq S$, for each~$x\in S$.
\end{observation}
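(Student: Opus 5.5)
We prove the two directions separately, working directly from the definition of \fviable: $S$ is \fviable if and only if $|S\cap\prey{x}| \ge |\prey{x}|$, that is, $\prey{x}\subseteq S$, for every $x\in S$. So \fviable simply means that $S$ is closed under taking prey. The observation then says that closure under one step backwards along the edges of \Food is the same as closure under any number of steps. Here $X_{\le x}$ is the set of taxa that can reach $x$, and we take it to include $x$ itself via the trivial path.

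For the ``if'' direction, assume $X_{\le x}\subseteq S$ for every $x\in S$. Fix $x\in S$ and any prey $y\in\prey{x}$. Since $yx$ is an edge of \Food, $y$ reaches $x$, so $y\in X_{\le x}\subseteq S$. Hence $\prey{x}\subseteq S$, and therefore $|S\cap\prey{x}| = |\prey{x}| \ge 1\cdot|\prey{x}|$. Thus $S$ is \fviable.

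For the ``only if'' direction, assume $S$ is \fviable, so $\prey{z}\subseteq S$ for all $z\in S$. Fix $x\in S$ and $y\in X_{\le x}$. Then there is a directed path $y=v_0,v_1,\dots,v_\ell=x$ in \Food. We show by backward induction on $i$ that $v_i\in S$. The base case $v_\ell = x\in S$ holds by assumption. For the step, if $v_{i}\in S$, then $v_{i-1}\in\prey{v_{i}}\subseteq S$. Hence $y=v_0\in S$, so $X_{\le x}\subseteq S$.

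There is no real obstacle; the only point needing care is the convention that $x\in X_{\le x}$, which is harmless since $x\in S$ anyway. Acyclicity of \Food is not needed.
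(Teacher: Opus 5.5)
Your proof is correct: $1$-viability says exactly that $S$ is closed under taking prey, the converse direction follows because every prey edge $yx$ gives $y \in X_{\le x}$, and your backward induction along a directed path shows that this one-step closure yields $X_{\le x} \subseteq S$. The paper states this observation without proof because it considers it obvious, and your argument is the natural one it leaves implicit.
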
\todo{JS: Where did the proof of this observation go? It is easily verified, but I would find at least an intuition useful.  NZ: The proof is completely obvious. NH: agreed.}

\PDD remains \NP-hard on instances where the food web is an out-tree---a spider
graph to be more precise~\cite{faller}. It has been noted that the center
vertex can be removed, leaving only direct paths~\cite{PDD}. Together with
\cref{lem:EPStoAlpha} (see \cref{sec:reductions}), this implies the following.

\begin{observation}
	\label{obs:dcw}
	\aPDD, for any $\alpha \in (0,1]$, is \NP-hard even if the directed acyclic
	cutwidth of the food web is~1.
\end{observation}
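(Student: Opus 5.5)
We derive the claim by chaining two reductions: the known \NP-hardness of \PDD on spider food webs, and the reduction from \PDD to \aPDD provided by \cref{lem:EPStoAlpha}. Throughout, the directed acyclic cutwidth is taken over a topological ordering, so a disjoint union of directed paths, each laid out consecutively, has cutwidth~$1$.

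\textbf{Step 1.} Start from the result of Faller et al.~\cite{faller}: \PDD is \NP-hard when the food web is a spider, that is, an out-tree whose center~$c$ has out-neighbors that each start a directed path. Following~\cite{PDD}, remove~$c$. The center is a source, so it needs no prey. Each of its out-neighbors~$y$ has $c$ as its only prey, so after the removal $y$ becomes a source itself. The point to verify is how to compensate for the lost role of~$c$. I would make the following argument precise. Every viable set containing some non-center taxon must contain~$c$. Hence the instance is equivalent to one in which $c$ is deleted from both the food web and the tree (via the Some-$\{c\}$-contraction), with $k$ decreased by one and $D$ decreased by the length of the pendant edge of~$c$. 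Degenerate solutions consisting of $c$ alone, or empty, can be checked directly in polynomial time. The resulting food web is a disjoint union of directed paths.

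\textbf{Step 2.} For a disjoint union of directed paths, order the vertices path by path, each path in its topological order. Every gap between consecutive vertices is crossed by at most one edge, so the directed acyclic cutwidth is~$1$.

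\textbf{Step 3.} Apply \cref{lem:EPStoAlpha} to obtain, for any fixed $\alpha\in(0,1]$, an equivalent \aPDD instance. Note that on a food web where every vertex has in-degree at most~$1$, \viable and \aviable coincide for every $\alpha>0$. This holds because $|A\cap\prey{x}|\ge\alpha\cdot 1$ forces the unique prey of~$x$ to be in~$A$, and sources impose no constraint. So the identity map already works, and the cutwidth stays~$1$. If \cref{lem:EPStoAlpha} instead adds gadgets, I would check that those gadgets keep the cutwidth at~$1$. I do not need that check, however, because of the in-degree argument above.

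The only real obstacle is making sure that Step~1 handles the center removal correctly, in particular the adjustment of $k$ and~$D$.
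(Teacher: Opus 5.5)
Your proof is correct, and its first half matches the paper's justification. You start from the NP-hardness of $\varepsilon$-PDD on spider food webs~\cite{faller}, delete the center as noted in~\cite{PDD}, and observe that a disjoint union of directed paths has directed acyclic cutwidth~$1$.

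The two routes diverge in the transfer to $\alpha$-PDD. The paper appeals to \cref{lem:EPStoAlpha}. You instead note that when every in-degree is at most~$1$, the $\varepsilon$-viable and $\alpha$-viable sets coincide for every $\alpha\in(0,1]$, so the instance is itself an equivalent $\alpha$-PDD instance. Your route is the more robust one, for three reasons:
\begin{enumerate}
\item \cref{lem:EPStoAlpha} is stated only for $\alpha<1$.
\item It is stated only for star trees, and the hard spider instances cannot use a star tree unless $\mathrm{P}=\mathrm{NP}$, since the star-tree variant is polynomial-time solvable on directed-tree food webs.
\item For $\alpha>1/2$ it attaches $\lceil\alpha/(1-\alpha)\rceil-1\ge 1$ new prey to every non-source. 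This pushes in-degrees to at least~$2$, and hence the directed acyclic cutwidth to at least~$2$.
\end{enumerate}
The paper's route reuses a lemma it needs elsewhere anyway. Yours needs no gadget at all and covers $\alpha=1$ and $\alpha>1/2$ without modifying the food web. So drop the first sentence of your Step~3 and rely on the identity map alone.

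There is one slip in Step~1. The Some-$\{c\}$-contraction removes every edge on the root-to-$c$ path, so for $S'\subseteq X\setminus\{c\}$ you get $PD_{\mathcal{T}'}(S')=PD_{\mathcal{T}}(S'\cup\{c\})-PD_{\mathcal{T}}(\{c\})$. Hence $D$ must be decreased by $PD_{\mathcal{T}}(\{c\})$, not by the weight of the pendant edge of~$c$. These two quantities differ in general, precisely because the tree here is not a star. With that correction, your equivalence argument goes through, including the direct check of the degenerate solutions $\emptyset$ and $\{c\}$.
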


Consequently, as directed acyclic cutwidth is an upper bound on scanwidth, we
can not hope to find an efficient algorithm for \aPDD with respect to the
scanwidth of the food web~\cite{berry}.

\section{Reductions from \fsPDD and \sPDD to \asPDD}

\label{sec:reductions}

In this section, we present polynomial-time reductions from both \fsPDD and
\sPDD to \asPDD, for any fixed~$\alpha \in (0,1]$. We use these reductions in
\Cref{sec:k,sec:kernels} to extend hardness results of \fsPDD and \sPDD to
\asPDD.

\begin{lemma}
	\label{lem:1toAlpha}
  Fix any~$\alpha \in (0,1]$. Given an instance~$\Instance = (\Tree, \Food =
  (X,E_\Food), k, D)$ of \fsPDD with $|X| \ge k$, in time~$\Oh((n^2+k) / \alpha)$, we
  can compute an equivalent instance~$\Instance' = (\Tree', \Food' =
  (X',E_{\Food'})), k', D')$ of \asPDD such that
  \begin{enumerate}[(i)]
    \item $k' = k$ and $D' = D$,
	\item $X \subseteq X'$ and~$\Food'[X] = \Food$,
    \item $\PD(\{x\}) = \PDsub{\Tree'}(\{x\})$, for each~$x\in X$,
    \item $\PDsub{\Tree'}(\{x'\}) = 1$, for each~$x' \in X' \setminus X$, and
	\item Sources in~\Food are sources in~$\Food'$.
	\end{enumerate}
  Additionally, we can choose one of the following properties
  $\Food'$ satisfies:
	\begin{enumerate}[(a)]
    \item\label{itm:bipartite}If~$\Food$ is directed bipartite, then~$\Food'$
      is also directed bipartite.
    \item\label{itm:fewAddedVertices}At most~$2 \alpha^{-1} \cdot
      (k+\max_{x\in X}|\preyF{x}{\Food}|)$ new taxa are added and, if~$\Food$
      is bipartite, then~$\Food'$ is also bipartite.
	\end{enumerate}
\end{lemma}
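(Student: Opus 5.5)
The plan is to simulate the requirement ``all prey are saved'' by padding the prey set of each non-source taxon with new, cheap dummy prey. The amount of padding is chosen so that an $\alpha$-fraction of the enlarged prey set is exactly the number of original prey. We keep $\Tree'$ a star: every $x\in X$ keeps its edge weight, and every new taxon gets a leaf edge of weight~$1$. We set $k'=k$ and $D'=D$. Properties (i)--(v) then hold by construction, because only new taxa are added, and each new edge goes into a taxon of $X$ or into a new taxon. In both variants we then show equivalence. The forward direction is immediate: a \fviable set $S\subseteq X$ is \aviable in $\Food'$, since for every $x\in S$ all $p_x:=|\preyF{x}{\Food}|$ original prey are in $S$, and we will ensure $p_x\ge\alpha\cdot|\preyF{x}{\Food'}|$. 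The work lies in the backward direction.

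\emph{Variant~(a).} For each non-source $x$ with $p_x$ prey, add private dummy sources $Z_x$. Choose $|Z_x|=m_x$ as the largest integer with $\alpha(p_x+m_x)\le p_x$, so that $m_x=\Oh(p_x/\alpha)$. The effect is that $\lceil\alpha(p_x+m_x)\rceil=p_x$, and the dummies only ever point to sinks of a bipartite $\Food$, so directed bipartiteness is preserved. For the backward direction, take an \aviable $S'$ in $\Food'$ with $|S'|\le k$ and $\PDsub{\Tree'}(S')\ge D$. For each $x\in S'\cap X$, the number of its original prey missing from $S'$ is at most $|S'\cap Z_x|$. Replace these dummies by the missing original prey, and delete all dummies of taxa not in $S'$. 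Because $\Food$ is bipartite, the inserted original prey are sources of $\Food'$ (property (v)), so viability is preserved, and the dummies were private to $x$. Each swap trades a weight-$1$ leaf for a leaf of weight $\ge1$, so the size does not increase and the PD does not decrease. The result is a subset of $X$ that is \fviable, which gives a solution of $\Instance$. The running time is $\Oh(n^2/\alpha)$.

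\emph{Variant~(b).} Here the exchange argument fails, since an original prey may itself be a non-source whose inclusion forces further taxa. Instead, we make the dummies unsavable. Add a shared pool $Y$ of $\lceil (k+1)/\alpha\rceil$ new sources and a shared pool $Z$ of $M:=\lceil \max_x p_x/\alpha\rceil$ new taxa, each having all of $Y$ as prey. Saving any $z\in Z$ would require more than $k$ taxa from $Y$, so no solution contains any element of $Z$. For each non-source $x$, connect some $m_x\le M$ dummies of $Z$ to $x$, with $m_x$ chosen as in variant~(a). Then every \aviable $S'$ of size at most $k$ satisfies $S'\cap Z=\emptyset$. Hence $S'$ must contain all $p_x$ original prey of each $x\in S'$. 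Discarding $S'\cap Y$ then keeps viability, since elements of $Y$ are prey only of $Z$, and it yields a \fviable subset of $X$; the PD requirement must still be handled. The count of new taxa is $|Y|+|Z|\le 2\alpha^{-1}(k+\max_x p_x)$. The edges $Y\to Z\to X$ keep the graph bipartite in the undirected sense, though not directed bipartite. The running time is $\Oh((n^2+k)/\alpha)$.

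The main obstacle is the backward direction with respect to PD. In variant~(b), discarding elements of $Y$ lowers the PD, and a solution might reach $D$ only with their help. To fix this, I would first try to make new taxa contribute nothing useful, for instance by keeping weight~$1$ and arguing that any spare budget can be spent on taxa of $X$ instead. If that is delicate, I would instead use the assumption $|X|\ge k$ together with a greedy exchange of $Y$-taxa for remaining sources of $X$, each of PD at least~$1$. Checking the integrality of $m_x$, so that $\lceil\alpha(p_x+m_x)\rceil=p_x$ exactly, is routine.
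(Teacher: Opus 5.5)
Your variant~(a) is the paper's construction ($m_x=\lfloor(\alpha^{-1}-1)p_x\rfloor$ private weight-$1$ sources per taxon). The mechanism of your variant~(b) is also the paper's: dummy prey that can never be saved because they sit above a pool of at least $k/\alpha$ sources. The genuine error is your claim that the single shared gadget $Y\to Z\to X$ keeps $\mathcal{F}'$ bipartite. Every $z\in Z$ is adjacent to all of $Y$, so all of $Z$ lies in one colour class, and every non-source $x$ with $m_x\ge 1$ must lie in the other. Suppose two adjacent taxa $x_1x_2$ of $\mathcal{F}$ both receive dummies $z_1,z_2$; for $\alpha=1/2$ this holds for every pair of adjacent non-sources. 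Then $x_1z_1yz_2x_2x_1$ is an odd cycle, or a triangle if $z_1=z_2$. This is exactly the situation where (b) is used: in the cross-composition of \cref{thm:noKernel}, each $b\in B$ and each $e\in E$ having $b$ as prey are adjacent non-sources. Your version of~(b) would therefore lose the ``even if the food web is bipartite'' part of the main theorem. The paper avoids this by fixing a bipartition $X_1\cup X_2$ of $\mathcal{F}$ and building two gadgets $B_1\to A_1$ and $B_2\to A_2$, drawing the dummy prey of $x\in X_i$ only from $A_i$. Then $X_1\cup A_2\cup B_1$ and $X_2\cup A_1\cup B_2$ form a bipartition of $\mathcal{F}'$. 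This duplication is also where the factor $2$ in the bound of~(b) comes from.

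Second, your PD accounting in variant~(a) is wrong as stated. Dropping dummies of taxa outside $S'$, or surplus dummies of some $x\in S'$ beyond its number of missing prey, loses one unit of PD each with nothing gained. For example, take $\alpha=1/2$, food web $w\to x$ plus an isolated source $u$, $k=3$, $S'=\{w,x,z_x\}$ and $D:=PD_{\mathcal{T}'}(S')$; your procedure returns $\{w,x\}$, which falls short of $D$. The repair is the one you only sketch for~(b), and the paper uses it in both variants. Once you have a \fviable set $S''\subseteq X$ containing $S'\cap X$, extend it to size exactly~$k$ by repeatedly adding a taxon of $X$ all of whose prey are already chosen. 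This is possible since $\mathcal{F}$ is acyclic and $|X|\ge k$. The result is \fviable by \cref{obs:fviable}. Every taxon of $X$ has weight at least $1$ in the star, so its PD is at least $PD(S'\cap X)+k-|S'\cap X|\ge PD_{\mathcal{T}'}(S')\ge D$. Adding only ``remaining sources of $X$'' does not suffice in general, for instance if $\mathcal{F}$ is a directed path. You need the taxa that become available once all their prey are in the set.
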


\begin{proof}
  We define two instances~$\Instance_{\ref{itm:bipartite}} = (\Tree_a, \Food_a,
  k, D)$ and~$\Instance_{\ref{itm:fewAddedVertices}} = (\Tree_b, \Food_b, k,
  D)$ that satisfy Conditions (i)--(v) and either Condition
  (\ref{itm:bipartite}) or Condition (\ref{itm:fewAddedVertices}),
  respectively. $\Instance_{\ref{itm:fewAddedVertices}}$ is a yes-instance of
  \asPDD if and only if $\Instance$ is a yes-instance of \fsPDD.  The same is
  true for $\Instance_{\ref{itm:bipartite}}$ but only if $\Food$ is directed
  bipartite.

  To construct $\Instance_{\ref{itm:bipartite}}$ from $\Instance$, we add a set 
  $A_x := \{y_1, \ldots, y_{a_x}\}$ of
  $a_x := \lfloor (\alpha^{-1} - 1) \cdot |\preyF{x}{\Food}| \rfloor$ prey
  to each vertex $x \in X$, and we make each node $y_i \in A_x$ a child of
  $\rho$ in $\Tree$ with $\omega(\rho y_i) = 1$.

  To construct $\Instance_{\ref{itm:fewAddedVertices}}$ from $\Instance$, we
  add two sets~$A_1$ and $A_2$ of $\lfloor \max_{x\in X} |\preyF{x}{\Food}|
  \cdot \alpha^{-1} \rfloor$ taxa each, and two sets~$B_1$ and~$B_2$ of~$\lceil
  k\cdot \alpha^{-1} \rceil$ taxa each, and we add an edge $ba$ to $\Food$, for
  all $i \in \{1, 2\}$, $b \in B_i$, and $a \in A_i$. All new taxa~$y$ are
  added to $\Tree$ as children of the root $\rho$ with $\omega(\rho y) = 1$.
  Let~$X_1 \cup X_2$ be a bipartition of \Food. For each $i \in \{1,2\}$ and
  each $x \in X_i$, choose~$a_x$ arbitrary taxa of~$A_i$ as prey of~$x$.

  $\Instance_{\ref{itm:bipartite}}$ and
  $\Instance_{\ref{itm:fewAddedVertices}}$ satisfy Conditions (i)--(v),
  that $\Instance_{\ref{itm:bipartite}}$ satisfies Condition
  (\ref{itm:bipartite}), that $\Instance_{\ref{itm:fewAddedVertices}}$
  satisfies Condition (\ref{itm:fewAddedVertices}), and that both instances can
  be constructed in $\Oh((n^2+k) / \alpha)$ time.  We have to prove that
  $\Instance$ is a yes-instance of \fsPDD if and only if
  $\Instance_{\ref{itm:fewAddedVertices}}$ is a yes-instance of \asPDD, and
  that the same holds for $\Instance_{\ref{itm:bipartite}}$, provided \Food is
  directed bipartite.
	
  First, assume that $S$ is a solution to \fsPDD on~$\Instance$. Then
  $\PDsub{\Tree_a}(S) = \PDsub{\Tree_b}(S) = \PD(S) \ge D$ and $|S| \le k$.
  Thus, to show that $S$ is a solution to \asPDD on
  $\Instance_{\ref{itm:bipartite}}$
  and~$\Instance_{\ref{itm:fewAddedVertices}}$, it suffices to show that~$S$ is
  \aviable in~$\Food_{\ref{itm:bipartite}}$
  and~$\Food_{\ref{itm:fewAddedVertices}}$.  To this end, note that $|S \cap
  \preyF{x}{\Food_a}| = |S \cap \preyF{x}{\Food_b}| = |S \cap \preyF{x}{\Food}|
  = |\preyF{x}{\Food}|$, for all $x\in S$, because $S$ is \fviable in~\Food.
  However, $|\preyF{x}{\Food_a}| = |\preyF{x}{\Food_b}| = a_x +
  |\preyF{x}{\Food}| \le  |\preyF{x}{\Food}|/ \alpha$. Thus, $|S \cap
  \preyF{x}{\Food_a}| = |S \cap \preyF{x}{\Food_b}| \ge
  \alpha|\preyF{x}{\Food_a}| = \alpha|\preyF{x}{\Food_b}|$, for all $x \in S$,
  that is, $S$ is \aviable in both $\Instance_{\ref{itm:bipartite}}$
  and~$\Instance_{\ref{itm:fewAddedVertices}}$.

  Next, assume that $S$ is a solution to \asPDD on
  $\Instance_{\ref{itm:fewAddedVertices}}$. Then note that saving any vertex in
  $A_i$ requires saving at least $k$ vertices in $B_i$\todo{JS: And we may
    assume $D>k$? NH: I think we may assume $D \geq k$ (which is enough?) since
    edge weights are positive integers, so the case with $D < k$ is a trivial no
	instance.  NZ: Why is this important?  The argument does not need this.
	JS: I felt like it did.}, for
  any $i \in \{1, 2\}$.  Thus, $S \cap (A_1 \cup A_2) = \emptyset$.  Since
  $\preyF{x}{\Food_{\ref{itm:fewAddedVertices}}} \cap (B_1 \cup B_2) =
  \emptyset$, for all $x \in X$, this implies that
  $\preyF{x}{\Food_{\ref{itm:fewAddedVertices}}} \cap S \subseteq X$, for all~$x
  \in X$. In
  particular $\preyF{x}{\Food_{\ref{itm:fewAddedVertices}}} \cap S =
  \preyF{x}{\Food} \cap S$, for all $x \in X$. Thus, for all $x \in X \cap S$, we have
  $|\preyF{x}{\Food} \cap S| = |\preyF{x}{\Food_{\ref{itm:fewAddedVertices}}}
  \cap S| \ge \alpha |\preyF{x}{\Food_{\ref{itm:fewAddedVertices}}}| >
  |\preyF{x}{\Food}| - 1$, that is, $X \cap S$ is \fviable in~\Food.  Any subset $Y
  \subseteq X$ such that every edge in \Food incident to $Y$ has its tail in $Y$
  is trivially \fviable.  We call such a set a \emph{source set} in \Food. Thus,
  if we choose a set $Y \subseteq X$ such that~$S' := (X \cap S) \cup Y$ has
  size $|S'| = k$, then~$S'$, being the union of two \fviable sets, is also
  \fviable. Moreover, since every vertex $y \in S \setminus X$ satisfies
  $\omega(\rho y) = 1$, and $\omega(\rho x) \ge 1$, for all~$x \in X$, we have
  $\PD(S') \ge \PDsub{\Tree_b}(S) \ge D$.  This shows that \Instance is a
  yes-instance of~\fsPDD.

  Finally, assume that $S$ is a solution to \asPDD on
  $\Instance_{\ref{itm:bipartite}}$, and that \Food is directed bipartite.
  Then, as in the previous paragraph, $|\preyF{x}{\Food_{\ref{itm:bipartite}}}
  \cap S| > |\preyF{x}{\Food}| - 1$, for every vertex $x \in
  X \cap S$.  Thus, the set $S'' := (X \cap S) \cup \bigcup_{x \in X \cap S}
  \preyF{x}{\Food}$ is a \fviable set in $\Food$ of size $|S''| \le |S|$.  We can
  once again construct a \fviable set $S'$ of size $|S'| = k$ in $\Food$ by
  defining $S' := S'' \cup Y$, for a source set $Y$ in \Food such that $|S'' \cup
  Y| = k$.  As in the previous paragraph, this ensures that~$|S'| \ge |S|$ and,
  therefore, that $\PDsub{\Tree}(S') \ge \PDsub{\Tree_a}(S) \ge D$, as every
  vertex $y \in S \setminus X$ satisfies~$\omega(\rho y) = 1$, and $\omega(\rho
  x) \ge 1$, for all $x \in X$. This shows that \Instance is a yes-instance
  of~\fsPDD.
\end{proof}

In \cref{lem:1toAlpha}, we ensured that in the resulting instance, the solution
size~$k$ and the diversity threshold~$D$ remain unchanged, but $\kbar$ and
$\Dbar$ increased. The following reduction from \sPDD ensures that $\Dbar$
remains unchanged.

\begin{lemma}
	\label{lem:EPStoAlpha}
  Fix any~$\alpha \in (0,1)$. Given an instance~$\Instance = (\Tree, \Food =
  (X,E_\Food), k, D)$ of~\sPDD, in time~$\Oh(n^2 \cdot \alpha / (1 - \alpha))$,
  we can compute an equivalent instance~$\Instance' = (\Tree', \Food' =
  (X',E_{\Food'})), k', D')$ of \asPDD such that
  \begin{enumerate}[(i)]
	\item $\pDbar := \PDsub{\Tree'}(X') - D' = \Dbar$,
	\item $X \subseteq X'$ and~$\Food'[X] = \Food$,
	\item $\PD(\{x\}) = \PDsub{\Tree'}(\{x\})$, for each~$x\in X$,
	\item $\PDsub{\Tree'}(\{x'\}) = 2\max_{x\in X} \PD(\{x\})$, for each~$x' \in
	X' \setminus X$,
	\item Sources in~\Food are sources in~$\Food'$,
	\item Taxa~$x \in X' \setminus X$ are sources in~$\Food'$, and
	\item $\Food'$ is directed bipartite if $\Food$ is directed bipartite.
	\end{enumerate}
\end{lemma}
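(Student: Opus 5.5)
The plan is to pad every non-source taxon of $\Food$ with new \emph{source} prey. The number of padding prey is chosen so that, in $\Food'$, a set is \aviable at $x$ exactly when it contains all padding prey of $x$ and at least one original prey. To keep $\Dbar$ unchanged, the padding taxa will be forced into every solution by increasing $k$ and $D$ by exactly their contribution.

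\textbf{Construction.} Let $w := \max_{x\in X}\PD(\{x\})$. For each non-source $x\in X$ with $p_x := |\preyF{x}{\Food}|$, set
\[
  m_x := \max\Bigl\{0, \Bigl\lceil \tfrac{\alpha p_x - 1}{1-\alpha} \Bigr\rceil\Bigr\}.
\]
We add a set $N_x$ of $m_x$ new taxa, each a source in $\Food'$ with a single edge into $x$, and each attached to the root of the star $\Tree'$ with weight $2w$. Let $N := \bigcup_x N_x$ and $M := |N|$, and set $k' := k+M$ and $D' := D+2wM$. Then $\PDsub{\Tree'}(X') = \PD(X)+2wM$, which gives (i). Properties (ii)--(vi) hold by construction. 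For (vii), the new edges enter only non-sources, and in a directed bipartite web these have only incoming edges. Since $M \le \sum_x \alpha p_x/(1-\alpha) + n$, the construction takes $\Oh(n^2\alpha/(1-\alpha))$ time.

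\textbf{Key counting fact.} We claim that $m_x$ lies in the half-open interval $[(\alpha p_x-1)/(1-\alpha),\ \alpha p_x/(1-\alpha))$. The lower end holds by the ceiling. The upper end holds because the interval has length $1/(1-\alpha)>1$ and because $\alpha p_x>0$ covers the case $m_x=0$. The lower bound is equivalent to $1+m_x \ge \alpha(p_x+m_x)$, and the upper bound is equivalent to $m_x < \alpha(p_x+m_x)$. Hence a set containing all of $N_x$ and at least one original prey of $x$ is \aviable at $x$. Conversely, any set that is \aviable at $x$ contains more than $m_x$ prey of $x$, so it must contain at least one original prey of $x$.

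\textbf{Forward direction.} If $S$ is an \viable solution of $\Instance$, then $S\cup N$ is \aviable in $\Food'$ by the counting fact, since the taxa in $N$ are sources. It has size at most $k'$ and diversity at least $D'$.

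\textbf{Backward direction (main obstacle).} Suppose $S'$ is a solution of $\Instance'$, let $S := S'\cap X$, and let $j := |N\setminus S'|$. By the counting fact, $S$ is \viable in $\Food$. Moreover $|S| \le k+j$ and $\PD(S) \ge D+2wM-2w(M-j) = D+2wj$. The difficulty is that $S$ may be too large by $j$ taxa.

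To fix this, I would repeatedly delete a sink of the DAG $\Food[S]$. Removing a sink keeps the set \viable, because no remaining taxon uses it as prey. Each deletion loses at most $w$ diversity on the star. After $j$ deletions we obtain an \viable set of size at most $k$ and diversity at least $D+2wj-wj \ge D$. This step is exactly why the new leaves get weight $2w$ rather than $w$.
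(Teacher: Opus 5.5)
Your proof is correct and uses the paper's construction: pad each non-source with new source prey attached to the root of the star with weight $2\max_{x}\PD(\{x\})$, and raise $k$ and $D$ by exactly the contribution of the padding. Your $D' = D + 2wM$ is the right value. The paper's text writes $2D + 2mA$, which is a typo that contradicts (i) and the paper's own forward direction.

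Two differences are worth noting.

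First, the paper pads $x$ with $\lceil \alpha p_x/(1-\alpha)\rceil - 1$ taxa. This is the \emph{largest} integer in the window $[(\alpha p_x-1)/(1-\alpha),\,\alpha p_x/(1-\alpha))$. You take the smallest one and explicitly skip sources, where the paper's formula would give $-1$. Only membership in this window matters, so both choices work.

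Second, and more substantively, in the backward direction the paper only asserts that every padding taxon may be assumed to lie in the solution, because padding taxa are sources of maximal weight. Making that exchange rigorous requires dropping some taxon of $S\cap X$ without breaking viability, which means dropping a sink of $\Food[S\cap X]$. Your argument avoids the normalization and trims sinks from $S'\cap X$ directly. This is the explicit, fully justified form of the same exchange.

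One correction: your closing sentence is wrong. Weight $w$ would also suffice, since you would get $\PD(S)\ge D+wj$ and lose at most $wj$ over at most $j$ deletions. So the factor $2$ is needed only to meet (iv), not for the argument.

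One wording fix: phrase the trimming as deleting $\max\{0,|S|-k\}\le j$ sinks rather than "$j$ deletions", so the step makes sense even when $|S|<j$.
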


\begin{proof}
  To construct $\Instance'$ from $\Instance$, we add a set~$A_x := \{y_1,
  \dots, y_{a_x}\}$ of $a_x := \lceil |\preyF{x}{\Food}| \cdot \alpha / (1 -
  \alpha) \rceil - 1$ taxa as prey of~$x$, for each $x \in X$. This ensures
  that, unless $x$ is a source of $\Food$, we have $a_x <
  \alpha|\preyF{x}{\Food'}| \le a_x + 1$. We obtain~$\Tree'$ from~$\Tree$ by making
  each $y_i \in A_x$ a child of $\rho$ and defining $\omega(\rho y_i) = 2m$,
  where~$m := \max_{x\in X} \PD(\{x\})$. Let~$A := \sum_{x \in X} a_x$, $k' :=
  k + A$, and~$D' := 2D + 2m \cdot A$. This defines an instance~$\Instance' :=
  (\Tree',\Food',k',D')$ of \asPDD.

  $\Instance'$ satisfies Conditions (i)--(vii) and
  that $\Instance'$ can be constructed in $\Oh(n^2 \cdot \alpha / (1 -
  \alpha))$ time---which is $\Oh(n^2)$ for any constant~$\alpha$. It remains to
  show that $\Instance$ is a yes-instance of \sPDD if and only if $\Instance'$
  is a yes-instance of \asPDD.
	
  First, assume that $S$ is a solution to \sPDD on \Instance.  Then, we define $S'
  := S \cup \bigcup_{x \in X} A_x$.  This ensures~$\PDsub{\Tree'}(S')
  = \PD(S) + 2m A \ge D + 2mA = D'$ and $|S'| = |S| + A \le k'$.  Thus, to
  prove that $S'$ is a solution to \asPDD on $\Instance'$, we have to show that
  $S'$ is \aviable in $\Food'$.  This follows because all $y \in \bigcup_{x \in
  X} A_x$ are sources in $\Food'$ and, for all $x \in X$, we have
  \[
    |S' \cap \preyF{x}{\Food'}|
    = |S \cap \preyF{x}{\Food}| + a_x
    \ge 1 + a_x
    \ge \alpha|\preyF{x}{\Food'}|.
  \]

  Now, assume that $S$ is a solution to \asPDD on $\Instance'$. Because taxa in
  $\bigcup_{x\in X} A_x$ are sources and have maximal diversity, we may assume
  that~$\bigcup_{x\in X} A_x \subseteq S$. Now let~$S' := S \cap X$. Then
  $|S'| = |S| - A \le k$ and~\mbox{$\PDsub{\Tree}(S') = \PDsub{\Tree'}(S) - 2mA \ge
  D$}.  We need to prove that $S'$ is \viable in
  $\Food$.  So consider any non-source vertex~$x$ of~$S'$.  Since $S$ is
  \aviable in $\Food'$, we have
  \[
    |S \cap \preyF{x}{\Food'}|
    \ge \alpha|\preyF{x}{\Food'}|
    > a_x.
  \]
  Since $|\preyF{x}{\Food'} \setminus \preyF{x}{\Food}| = a_x$, this implies
  that $|S' \cap \preyF{x}{\Food}| = |S \cap \preyF{x}{\Food}| > 0$, that is,
  $S'$ is \viable in $\Food$.
\end{proof}

\section{Parameterization by Solution Size and Diversity}

\label{sec:k}

In this section, we show that, when parameterized by $k$, $D$, $\kbar$ or
$\Dbar$, \wPDD is \XP and \asPDD is \Wh{1}-hard, for every~$\alpha \in (0,1]$.
The hardness results even hold for directed bipartite food webs.

Recall that if there is a solution for an instance of~\wPDD, then one of size
\emph{exactly}~$k$ exists~\cite{WeightedFW}. Thus, it suffices to search for a
solution to \wPDD on a given instance \Instance among the subsets of $X$ of
size $k$.  Since there are $\binom{|X|}{k}$ such subsets, and it takes $\Oh(n +
m)$ time to test whether any such subset $S \subseteq X$ is $\gamma$-viable and
whether it has diversity at least $D$, we have:

\begin{observation}
	\label{cor:k-XP}
	\wPDD is \XP when parameterized by $k$ or $\kbar$.
\end{observation}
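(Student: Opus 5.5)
The plan is a brute-force enumeration whose correctness rests on the fact recalled just before the statement: if an instance of \wPDD has a solution, then it has one of size \emph{exactly}~$k$~\cite{WeightedFW}. So it suffices to decide whether some $k$-subset $S \subseteq X$ is $\gamma$-viable and satisfies $\PD(S) \ge D$. The algorithm first handles the trivial case $k \ge |X|$ by testing $S = X$ directly. Otherwise it enumerates all subsets $S \subseteq X$ with $|S| = k$ and tests each one.

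For each candidate $S$ there are two checks.
\begin{itemize}
\item For $\gamma$-viability, we sum $\gamma(uv)$ over the edges $uv \in E_\Food$ with $u \in S$, grouped by the head~$v$, and then check that the sum is at least~$1$ for every non-source $v \in S$. One pass over $E_\Food$ does this, so it takes $\Oh(n+m)$ time.
\item For diversity, we compute $\PD(S)$ by marking, for each $x \in S$, the edges of $\Tree$ on the path from $x$ to~$\rho$, stopping when an already marked edge is reached. We then sum the weights of the marked edges. This takes time linear in the size of $\Tree$, which is $\Oh(n)$ because inner vertices have out-degree at least~$2$.
\end{itemize}

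The running time depends on which parameter we use. For parameter $k$, the number of candidates is $\binom{|X|}{k} \le |X|^k$, so the total time is $|X|^k \cdot \Oh(n+m)$, which is $|\sigma|^{\Oh(k)}$ as required for \XP. For parameter $\kbar = |X| - k$, we use $\binom{|X|}{k} = \binom{|X|}{\kbar} \le |X|^{\kbar}$. Equivalently, we enumerate the complements $X \setminus S$, which are the sets of $\kbar$ taxa to be lost. This gives time $|\sigma|^{\Oh(\kbar)}$.

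The only step that needs any care is the claim that restricting to sets of size exactly $k$ loses nothing. Here we simply invoke~\cite{WeightedFW}, as the paper does. The idea behind that claim is that a $\gamma$-viable set of size less than $k$ can be extended, for example by adding taxa along a topological order of $\Food$. A source, or a taxon all of whose prey are already in the set, stays viable when added. Adding taxa also never decreases $\PD$, since $\omega$ is positive.

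Everything else is routine bookkeeping, and no step presents a genuine obstacle.
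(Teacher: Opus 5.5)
Your proposal is correct and follows the paper's argument: you restrict to the $\binom{|X|}{k} = \binom{|X|}{\kbar}$ subsets of size exactly $k$ using the fact from \cite{WeightedFW}, and you test each one for $\gamma$-viability and diversity in $\Oh(n+m)$ time. One caveat on your informal justification of that fact: for an arbitrary $\gamma$, a non-source taxon whose prey are all saved is $\gamma$-viable only if its total incoming weight is at least $1$. So your extension step, and your test of $S = X$ when $k \ge |X|$, implicitly need that assumption, or a preprocessing step that iteratively deletes taxa that can never be viable.
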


This result transfers to~\PDD and to~\aPDD for each~$\alpha \in (0,1]$.
Because~$k\le D$ and~$\kbar \le \Dbar$, \cref{cor:k-XP} also holds for the
bigger parameters~$D$ and~$\Dbar$. In the remainder of this section, we prove
the \Wh{1}-hardness of \asPDD:

\begin{theorem}
	\label{thm:D-Wh}
  \asPDD, for any~$\alpha \in (0,1]$, is \Wh{1}-hard when parameterized by~$D$,
  even if the food web is directed bipartite and
  the maximum edge weight in the phylogenetic tree is $2$.
\end{theorem}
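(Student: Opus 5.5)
We first prove the statement for $\alpha = 1$, i.e., for \fsPDD, and then transfer it to every $\alpha \in (0,1]$ with \cref{lem:1toAlpha}, choosing option~(\ref{itm:bipartite}).
That lemma keeps $k$ and $D$ unchanged and keeps the food web directed bipartite.
It keeps the weight of every old taxon and gives every new taxon weight~$1$, so the maximum tree-edge weight stays at most~$2$.
It also runs in polynomial time for fixed~$\alpha$.
Hence a parameterized reduction to \fsPDD with parameter~$D$, directed bipartite food web and edge weights in $\{1,2\}$ yields the theorem.

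For the base case we reduce from \CP parameterized by the clique size~$\ell$, which is \Wh{1}-hard.
Given $G=(V,E)$ and $\ell$, the food web~$\Food$ has one taxon~$x_v$ per vertex $v\in V$ and one taxon~$y_e$ per edge $e=uv\in E$.
The arcs are $x_u y_e$ and $x_v y_e$, so vertices are the sources and each edge-taxon eats exactly its two endpoints.
This is directed bipartite.
The tree~$\Tree$ is a star in which every vertex-taxon hangs from the root with weight~$1$ and every edge-taxon with weight~$2$.
We set $k := \ell + \binom{\ell}{2}$ and $D := \ell + 2\binom{\ell}{2}$, so $D = \Oh(\ell^2)$ is bounded by a function of the parameter.

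For the forward direction, a clique~$C$ yields $S = \{x_v : v\in C\}\cup\{y_e : e\subseteq C\}$.
This set is \fviable by \cref{obs:fviable}, has size~$k$ and diversity exactly~$D$.

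The converse is the main step.
Let $S$ be \fviable with $|S|\le k$ and $\PD(S)\ge D$.
Write $a$ for the number of vertex-taxa and $b$ for the number of edge-taxa in~$S$.
Then $a+b\le k$ and $a+2b\ge D$, and subtracting gives $b \ge D-k = \binom{\ell}{2}$.
By \cref{obs:fviable}, the $b$ chosen edge-taxa must have all their endpoints among the $a$ chosen vertex-taxa, so $b\le\binom{a}{2}$.
Suppose $a\le \ell-1$.
Then $b\le\binom{\ell-1}{2}<\binom{\ell}{2}$, a contradiction; hence $a\ge \ell$.
Combined with $a+b\le k$ and $b\ge\binom{\ell}{2}$, this forces $a=\ell$ and $b=\binom{\ell}{2}$.
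So the $\ell$ chosen vertices span $\binom{\ell}{2}$ edges of~$G$ and therefore form a clique.
The only delicate point is this counting argument, in particular the need for weight exactly~$2$ on edge-taxa: it makes edge-taxa strictly more valuable while still letting $D$ be a function of~$\ell$.

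Finally, since $k$ and $D$ are unchanged under \cref{lem:1toAlpha}, the composed reduction is a parameterized reduction for every fixed~$\alpha$.
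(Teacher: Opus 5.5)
Your proposal is correct and follows the paper's proof essentially verbatim. It uses the same edge-subdivision food web in which each edge-taxon eats its two endpoints, star weights $1$ and $2$, $k=\ell+\binom{\ell}{2}$ and $D=\ell^2$, the same counting argument via $b\le\binom{a}{2}$, and the same transfer to all $\alpha$ via \cref{lem:1toAlpha}\ref{itm:bipartite}. Your explicit derivation $b\ge D-k$ and your check that the added taxa have weight $1$ (so the maximum edge weight stays $2$) spell out details the paper leaves implicit; only note that $\binom{\ell-1}{2}<\binom{\ell}{2}$ needs $\ell\ge 2$, which is harmless since \textsc{Clique} is trivial for $\ell\le 1$.
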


This result is somewhat surprising, because \sPDD is \FPT when parameterized
by~$D$~\cite{PDD}. Recall that in \PDD, one needs to save at least one prey to
save a non-source species. This is equivalent to choosing~$\alpha$ as the
inverse of the maximum in-degree of the food web.  Thus, fixing $\alpha$ to be
a constant, as done in~\asPDD, impacts the complexity of the problem.

\begin{proof}
  Given a pair $(G,k)$, where $G = (V,E)$ is an undirected graph and $k \in
  \NN$, the \CP problem asks whether $G$ has a clique of size $k$. Given such a
  an instance~$(G,k)$ of \CP, we construct from it an instance $\Instance' = (\Tree', \Food',
  k', D')$ of \fsPDD such that $D' = k^2$ and~$(G,k)$ is a yes-instance of
  clique if and only if~$\Instance'$ is a yes-instance of \fsPDD. Since \CP is
  \Wh{1}-hard with respect to~$k$~\cite{cygan}, this shows that \fsPDD is
  \Wh{1}-hard with respect to~$D$. The \Wh{1}-hardness of \asPDD with respect
  to~$D$, for any $\alpha \in (0,1]$, then follows from
  \cref{lem:1toAlpha}\ref{itm:bipartite}.

  Given an instance $\Instance = (G = (V,E), k)$ of \CP, we define the
  instance~$\Instance'$ of \fsPDD as follows: We divide every edge $e$ of $G$
  into two edges. Let $v_e$ be the introduced vertex, let~$V_E$ be the set
  of all these vertices introduced by subdividing all edges of $G$, and let
  $G'$ be the resulting graph.  We define a food web $\Food'$ by directing all
  edges of $G'$ towards the vertices in $V_E$. We define a star tree~$\Tree'$
  with root~$\rho$, $\w(\rho v) := 1$, for all~$v\in V$, and~$\w(\rho v_e) :=
  2$, for all~$v_e\in V_E$.
  Finally, let $\Instance' := (\Tree', \Food', k', D')$ be an instance of \fsPDD,
  where~$k' := \binom{k}{2} + k$ and~\mbox{$D' := 2\binom{k}{2} + k = k^2$}.
  Clearly, $\Instance'$ can be constructed from $\Instance$ in polynomial time.
  We need to prove that $\Instance$ is a yes-instance if and only if
  $\Instance'$ is.
	
  First, assume that $S$ is a clique of size~$k$ in~$G$. Then~$G[S]$
  contains~$\binom{k}{2}$ edges. Consequently, in $\Instance'$, the set~$S'$
  containing the vertices in~$S$ and the vertices associated with the edges
  in~$G[S]$ has size~$k'$.  The diversity of this set is~$D'$. The set~$S'$ is
  \fviable in~$\Food'$ because both endpoints of each edge in $G[S]$ are
  in~$S$.

  Now, assume that $S$ is a solution to \fsPDD on $\Instance'$.  If $S$
  contains $\ell$ vertices from~$V$, then it contains at most~$\binom{\ell}{2}$
  vertices from~$V_E$. Consequently, since~$\PD(S) \ge D'$, $S$ contains~$k$
  vertices in~$V$ and~$\binom{k}{2}$ vertices from~$V_E$. This implies that $S
  \cap V$ is a clique of size~$k$ from~$G$.\todo{NZ: This paragraph is very dense
  	now. JS: What do you mean by dense?}
\end{proof}

Next, we show that \asPDD, for any~$\alpha \in (0,1]$, is also \Wh{1}-hard when
parameterized by the complement parameter~$\Dbar$.

\begin{theorem}
	\label{thm:Dbar-Wh}
  \asPDD, for any~$\alpha \in (0,1]$, is \Wh{1}-hard when parameterized
  by~$\Dbar$, even if the food
  web is directed bipartite and the maximum edge weight in the phylogenetic tree is $2$.
\end{theorem}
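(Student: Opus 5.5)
The plan is to give a parameterized reduction from \CP (parameter $k$) in which the lost diversity $\Dbar$ depends only on $k$. I will first handle \fsPDD directly and then pad the instance to reach \asPDD for $\alpha<1$. \cref{lem:1toAlpha} is unsuitable here because it increases $\Dbar$. \cref{lem:EPStoAlpha} also does not fit as stated: it starts from \sPDD, and the added taxa receive weight $2\max_x\PD(\{x\})$, which can exceed $2$. The key reformulation is that, for $\alpha=1$, by \cref{obs:fviable} the lost set $L:=X\setminus S$ of a solution must be closed under taking predators. Moreover, $|L|\ge \kbar$ and $\PD$-loss$(L)\le\Dbar$, which is exact on a star.

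\textbf{Construction for $\alpha=1$.} Given $(G=(V,E),k)$, subdivide every edge $e=uv$ by a vertex $v_e$, as in the proof of \cref{thm:D-Wh}. This time, direct the edges \emph{from} $v_e$ \emph{to} $u$ and $v$, so that $v_e$ is a prey of both endpoints; the resulting food web is directed bipartite. In the star tree, set $\w(\rho v)=2$ for $v\in V$ and $\w(\rho v_e)=1$ for $v_e\in V_E$. Choose $k'$ such that $\kbar=\binom{k}{2}+k$, and $D'$ such that $\Dbar=\binom{k}{2}+2k$. Both parameters are bounded in $k$, and all weights are at most $2$.

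\textbf{Correctness for $\alpha=1$.} Suppose $K$ is a clique of size $k$. Lose $K$ together with the $\binom{k}{2}$ vertices $v_e$ with $e\subseteq K$. This set is predator-closed, because every predator of such a $v_e$ lies in $K$ and vertices in $V$ have no predators. It has exactly the required size and loss. Conversely, suppose a predator-closed $L$ contains $\ell$ vertices of $V$ and $e$ vertices of $V_E$. Closure forces both endpoints of each lost $v_e$ into $L$, so $e\le\binom{\ell}{2}$. The size and loss constraints give $e+\ell\ge\binom{k}{2}+k$ and $e+2\ell\le\binom{k}{2}+2k$. Subtracting the first from the second yields $\ell\le k$. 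Then $e\ge\binom{k}{2}+k-\ell\ge\binom{k}{2}$, and combined with $e\le\binom{\ell}{2}$ this forces $\ell=k$ and $e=\binom{k}{2}$. Hence $L\cap V$ is a $k$-clique.

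\textbf{Extension to $\alpha\in(0,1)$.} For each $x$ with $d_x:=|\prey{x}|>0$, add $a_x$ fresh source taxa as prey of $x$, each of weight $2$. Choose $a_x$ minimal with $(1-\alpha)(d_x+a_x)<1$; this is polynomial for fixed $\alpha$. Since every non-source $x$ lies in $V$, the new prey are attached only to $V$ and the web stays directed bipartite. Set $k'':=k'+A$ and $D'':=D'+2A$, where $A=\sum_x a_x$, so $\Dbar$ is unchanged.

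If all padding sources are saved, the choice of $a_x$ means that $\alpha$-viability of $x$ forces all $d_x$ original prey of $x$ to be saved, which is exactly \fviable. So the answers coincide with the $\alpha=1$ instance, provided we can assume all padding is saved. The main point to verify is this exchange argument. Suppose some padding source is unsaved. The saved set $S$ has a taxon with no saved predator, by acyclicity; choose such a taxon outside the padding, which exists unless everything saved is padding. Removing it preserves $\alpha$-viability of all other saved taxa. Adding the padding source increases the saved-prey count of its predator and cannot hurt anyone. Since the source has maximum weight $2$, $\PD$ does not decrease.

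Iterating the exchange yields a solution containing all padding, and stripping the padding gives a solution of the $\alpha=1$ instance. The forward direction is immediate. Hence \asPDD is \Wh{1}-hard with respect to $\Dbar$ for every $\alpha\in(0,1]$, with directed bipartite food webs and maximum edge weight $2$.
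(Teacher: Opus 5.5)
Your $\alpha=1$ part is correct and is the paper's construction: $v_e$ is a prey of both endpoints, with weight $2$ on $V$ and $1$ on $V_E$, $\kbar=\binom{k}{2}+k$ and $\Dbar=\binom{k}{2}+2k$. Your predator-closed counting is a clean version of the paper's argument.

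The gap is in the extension to $\alpha\in(0,1)$. If all padding prey of $x$ are saved, $x$ is $\alpha$-viable exactly when at most $(1-\alpha)(d_x+a_x)$ of its original prey are lost. This allowance \emph{grows} with $a_x$, so saved padding can only relax the constraint. Your condition $(1-\alpha)(d_x+a_x)<1$ therefore has no solution $a_x\ge 0$ once $d_x\ge 1/(1-\alpha)$. The construction is undefined for such vertices, and no amount of saved padding makes $\alpha$-viability equivalent to $1$-viability.

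The reduction genuinely fails. Take $\alpha=1/2$, $k=3$, and $G$ a cycle on $n\ge 12$ vertices, which has no triangle. Losing $6=\kbar$ edge-taxa that form a matching costs $6\le\Dbar=9$ and leaves every vertex with one of its two original prey. So the produced instance is a yes-instance, with or without saved padding, although $G$ has no $3$-clique. Your exchange argument is fine, but that is not where the difficulty lies.

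Saved padding can lower the $\alpha$-threshold only down to ``at least one original prey''. That is exactly what \cref{lem:EPStoAlpha} does with $a_x=\lceil d_x\alpha/(1-\alpha)\rceil-1$, which gives $a_x<\alpha(d_x+a_x)\le a_x+1$. This is why the paper does not pad the \fsPDD instance for $\alpha<1$. Instead it combines the known \Wh{1}-hardness of \sPDD with respect to $\Dbar$~\cite{PDD} with \cref{lem:EPStoAlpha}, which leaves $\Dbar$ unchanged.

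Forcing \emph{all} original prey would require lost padding, as in \cref{lem:1toAlpha}. The amount of lost padding needed grows with the degrees in $G$, so $\Dbar$ would no longer be bounded in $k$. You are right that the paper's route gives the weight-$2$ and bipartiteness side conditions only if the source \sPDD instance has unit weights and is directed bipartite; the paper does not address this. But a repair has to go through an $\varepsilon$-viability source in this way, not through padding of your $1$-viable instance.
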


\begin{proof}
  As \sPDD is \Wh{1}-hard when parameterized by $\Dbar$~\cite{PDD},
  \cref{lem:EPStoAlpha} proves the result for all $\alpha \in (0,1)$. The
  result for $\alpha = 1$, that is, for \fsPDD, remains to show.  Similarly, as in
  the proof of \cref{thm:D-Wh}, we reduce from \CP.  Here, the
  instance~$\Instance' = (\Tree',\Food',k',D')$ of \fsPDD that we construct from an
  instance~$\Instance = (G = (V,E),k)$ of \CP satisfies $\Dbar = O(k^2)$.
  Thus, since \CP is \Wh{1}-hard with respect to $k$, \fsPDD is \Wh{1}-hard
  with respect to $\Dbar$.

  To obtain $\Instance'$ from $\Instance$, we define a graph $G'$, vertices
  $v_e$, for all edges $e \in E$, and the set~$V_E$ as in the proof of
  \cref{thm:D-Wh}. Unlike in the proof of \cref{thm:D-Wh}, we construct
  $\Food'$ from~$G'$ by directing all edges of $G'$ towards $V$. We define a
  star tree~$\Tree'$ with root~$\rho$, and we set $\w(\rho v) = 2$, for
  all~$v\in V$, and $\w(\rho v_e) = 1$, for all $v_e\in V_E$. Then $\Instance'
  := (\Tree', G', k', D')$, where~$k' = |V| + |V_E| - \binom{k}{2} - k$ and~$D'
  = 2|V| + |V_E| - \binom{k}{2} - 2k$. In particular, $\pkbar = |V(G')| - k' =
  \binom{k}{2} + k$ and~$\pDbar = \PDsub{\Tree'}(V(G')) - D' = \binom{k}{2} +
  2k$.

  Instance~$\Instance'$ can be constructed in polynomial
  time.  It remains to prove that~$\Instance$ is a yes-instance of \CP if and only
  if~$\Instance'$ is a yes-instance of~\fsPDD.

  First, assume that $S$ is a clique of size~$k$ in~$G$. Then~$G[S]$
  contains~$\binom{k}{2}$ edges. Let the set~$S'$
  contain all vertices \emph{except} the vertices in $S$ and the vertices
  associated with the edges in $G[S]$. Then $S'$ has size $k'$ and a diversity of $D'$.
  All vertices in $V_E$ are sources and every vertex $v \in V$
  with $\preyF{v}{\Food'} \not\subseteq S'$ is in~$S$. Thus, $S'$ is
  \fviable.

  Now, assume that $S$ is a solution to \fsPDD on $\Instance'$. For each
  vertex~$v_e \in V_E$ that is not in~$S$, neither endpoint of $e$ is in $S$.
  Thus, if there are $\binom{\ell}{2}$ vertices in~$V_E$ that are not in~$S$,
  then there are at least~$\ell$ vertices in $V$ that are not in~$S$.
  Consequently, since~$|S| \le |V| + |V_E| - \binom{k}{2} - k$ but
  $\PDsub{\Tree'}(S) \ge 2|V| + |V_E| - \binom{k}{2} - 2k$, $S$ misses exactly~$k$
  vertices from $V$ and~$\binom{k}{2}$ vertices from $V_E$, representing all
  edges in $G$ between the vertices in $V \setminus S$. Thus, $V \setminus
  S$ is a clique of size~$k$ in~$G$.\todo{NZ: Again, very dense now. JS: Great! Or not?}
\end{proof}

\section{Kernels}

\label{sec:kernels}

\definecolor{r}{rgb}{1.0, 0.75, 0.75}  
\definecolor{a}{rgb}{1.0, 0.85, 0.55}  
\definecolor{b}{rgb}{0.70, 0.85, 1.0}  
\definecolor{g}{rgb}{0.75, 0.95, 0.75} 
\definecolor{w}{rgb}{1, 1, 1}
\definecolor{grey}{rgb}{0.9453, 0.9453, 0.9453}
\definecolor{grey2}{rgb}{0.85, 0.85, 0.85}

\begin{figure}[t]
	\tikzstyle{para}=[rectangle,draw=black,minimum height=.8cm,fill=gray!10,rounded corners=1mm, on grid]
	
	\newcommand{\tworows}[2]{\begin{tabular}{c}{#1}\\{#2}\end{tabular}}
	\newcommand{\threerows}[3]{\begin{tabular}{c}{#1}\\{#2}\\{#3}\end{tabular}}
	\newcommand{\distto}[1]{\tworows{Distance to}{#1}}
	\newcommand{\disttoc}[2]{\threerows{Distance to}{#1}{#2}}
	
	\DeclareRobustCommand{\tikzdot}[1]{\tikz[baseline=-0.6ex]{\node[draw,fill=#1,inner sep=2pt,circle] at (0,0) {};}}
	\DeclareRobustCommand{\tikzdottc}[2]{\tikz[baseline=-0.6ex]{\node[draw,diagonal fill={#1}{#2},inner sep=2pt,circle] at (0,0) {};}}
	
	\tikzset{
		diagonal fill/.style 2 args={fill=#2, path picture={
				\fill[#1, sharp corners] (path picture bounding box.south west) -|
				(path picture bounding box.north east) -- cycle;}},
		reversed diagonal fill/.style 2 args={fill=#2, path picture={
				\fill[#1, sharp corners] (path picture bounding box.north west) |- 
				(path picture bounding box.south east) -- cycle;}}
	}
	\centering
	\begin{tikzpicture}[node distance=2*0.45cm and 3.7*0.38cm, every node/.style={scale=0.57}]
		\linespread{1}
		\node[para,fill=a] (vc) {Minimum Vertex Cover};
		\node[para, diagonal fill=br, xshift=38mm] (ml) [right=of vc] {Max Leaf \#};
		\node[para, xshift=-40mm,fill=g] (dc) [left=of vc] {\distto{Clique}};
		
		\node[para, diagonal fill=ar, xshift=11mm] (dcc) [below= of dc] {\distto{Cluster}}
		edge[-{Stealth}] (dc)
		edge[-{Stealth}, bend left=10] (vc);
		\node[para, fill=a,xshift=41mm] (dcl) [below= of dc] {\distto{Co-Cluster}}
		edge[-{Stealth}] (dc)
		edge[-{Stealth}] (vc);
		\node[para, diagonal fill=ar, xshift=8mm] (ddp) [below=of vc] {\distto{Disjoint Paths}}
		edge[-{Stealth}] (vc)
		edge[-{Stealth}] (ml);
		\node[para,diagonal fill=br] (fes) [below =of ml] {\tworows{Feedback}{Edge Set}}
		edge[-{Stealth}] (ml);
		\node[para, xshift=2mm, diagonal fill=br] (bw) [below right=of ml] {Bandwidth}
		edge[-{Stealth}] (ml);
		\node[para, xshift=5mm, yshift=0mm,diagonal fill=ar] (td) [right=of ddp] {Treedepth}
		edge[-{Stealth}, bend right=28] (vc);
		
		\node[para, diagonal fill=ar] (fvs) [below= of ddp] {\tworows{Feedback}{Vertex Set}}
		edge[-{Stealth}] (ddp)
		edge[-{Stealth}, bend right=5] (fes);
		\node[para, diagonal fill=br] (dcw) [right= of bw] {\threerows{Directed}{Acyclic}{Cutwidth}};
		\node[para, diagonal fill=br] (cw) [below= of bw] {Cutwidth}
		edge[-{Stealth}] (bw)
		edge[-{Stealth}] (dcw);
		\node[para, xshift=-30mm, diagonal fill=ar] (pw) [below= of cw] {Pathwidth}
		edge[-{Stealth}] (ddp)
		edge[-{Stealth}] (td)
		edge[-{Stealth}] (cw);
		\node[para, diagonal fill=br] (sw) [below right= of cw] {Scanwidth}
		edge[-{Stealth}] (dcw);
		
		\node[para, xshift=5mm, fill=r] (dbp) [below left= of fvs] {\distto{Bipartite}}
		edge[-{Stealth}] (fvs);
		\node[para, yshift=0mm, diagonal fill=ar] (tw) [below=of pw] {Treewidth}
		edge[-{Stealth}] (fvs)
		edge[-{Stealth}] (pw)
		edge[-{Stealth}, bend right=8] (sw);

		\node[para, yshift=-20mm, diagonal fill={yellow!50}{grey2}] [below= of dc] {\tworows{{\fPDD}\;\;\;\;\;\;\;\;\;\;\;}{\;\;\;\;\;\;\;\;\;\;\fsPDD}};
	\end{tikzpicture}
  \caption{The relationship between structural parameters
    of the food web and the complexity of solving \fPDD (top left) and \fsPDD
    (bottom right). A parameter~$p$ is marked red (\tikzdot{r}) if the problem
    is \NP-hard for constant values of~$p$. If the problem is \FPT with respect
    to~$p$, then $p$ is marked in orange (\tikzdot{a}), green (\tikzdot{g}) or
    blue (\tikzdot{b}).  Green means that the problem has a polynomial kernel,
    orange means that it does not, and blue means that kernelization remains
    open.
    An edge~$p_1 p_2$ between parameters~$p_1$ and~$p_2$ indicates that, in every graph,
    the parameter~$p_1$ can be bounded by a function of~$p_2$. A
    more in-depth look into the hierarchy of graph parameters can be found
    in~\cite{graphparameters}.}
	\label{fig:fPDD-results}
\end{figure}

While~\rwPDD, a generalization of~\PDD and~\aPDD, for any~$\alpha \in (0,1]$,
is \FPT with respect to the vertex cover number~$\vc$~\cite{WeightedFW}, we
prove in this section that~\asPDD, for any~$\alpha \in (0,1]$, does not admit a
polynomial kernel when parameterized by~$\vc + D$, even on bipartite food webs,
assuming~\NPcoNPpoly.
This result is already known for~\sPDD~\cite{PDD}.\footnote{In~\cite{PDD}, it
is only stated that~\sPDD does not have a polynomial kernel for $D$, but a
closer look reveals that the larger parameter~$\vc + D$ is also excluded.}
Further, we show that \fPDD admits a kernel with respect to the food web's
distance to clique. Whether this result can be generalized to \aPDD remains
open. In fact, it is unknown even whether \hPDD can be solved in polynomial
time on cliques~\cite{WeightedFW}.
These results and their implications are visualized in \Cref{fig:fPDD-results}.

\subsection{No Polynomial Kernel With Respect to Vertex Cover Number}

The following is the main result of this paper.

\begin{theorem}
	\label{thm:noKernel}
  \asPDD, for any~$\alpha \in (0,1]$, does not admit a poly\-no\-mi\-al-size
  kernel when parameterized by $\vc + D$, the food web's vertex cover number
  plus the diversity threshold, even if the food web is bipartite,
  unless $\textup{NP} \subseteq \textup{coNP} / \textup{poly}$.
\end{theorem}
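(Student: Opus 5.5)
The plan is to first prove the statement for $\alpha = 1$, that is, for \fsPDD, by an OR-cross-composition from an \NP-hard problem. We then transfer it to every $\alpha \in (0,1]$ using \cref{lem:1toAlpha}\ref{itm:fewAddedVertices}.

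For the transfer, note that \cref{lem:1toAlpha}\ref{itm:fewAddedVertices} keeps $k$ and $D$ unchanged. It adds at most $2\alpha^{-1}(k + \max_x |\preyF{x}{\Food}|)$ new taxa and preserves bipartiteness. Hence the vertex cover number grows by at most that amount, provided the composed instance has small in-degrees relative to the parameter, or its high-in-degree vertices lie in the vertex cover. Composing the cross-composition with this polynomial-time reduction then gives a cross-composition into \asPDD whose parameter is still polynomial in $\max_i|\sigma_i| + \log t$. For this I must make sure that in the composed \fsPDD instance both $k \le D$ and the maximum in-degree are bounded polynomially in $\max_i|\sigma_i| + \log t$.

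For the source problem I would use \CP, which is \NP-hard. The polynomial equivalence relation groups instances $(G_i,k_i)$ with the same number of vertices $n$, the same number of edges $m$, and the same $k$; malformed inputs form one extra class. Within a class we may identify all $V(G_i)$ with a common vertex set $V = \{1,\dots,n\}$.

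The composed food web reuses the subdivision idea of \cref{thm:D-Wh}. There are $n$ shared vertex-taxa $V$ of weight $1$. For each instance $i$ and each edge $e \in E(G_i)$ there is an edge-taxon $v_{e,i}$ whose prey are the two endpoints in $V$. This food web is bipartite with vertex cover $V$, so $\vc \le n$, and every in-degree is $2$. Setting $D = k^2$ with edge-taxa of weight $2$ would work as before, but there is a catch. Selecting $\binom{k}{2}$ edge-taxa from different instances might jointly form a clique on $k$ chosen vertices even though no single $G_i$ contains one. This mixing across instances must be prevented.

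I expect this to be the main obstacle. The fix I would try is an instance selector encoded in binary. Add $2\lceil\log t\rceil$ selector taxa $s_{j,0}, s_{j,1}$, and give each edge-taxon $v_{e,i}$ the additional prey $s_{j,b_j(i)}$ for every bit position $j$, where $b_j(i)$ is the $j$-th bit of $i$. Choosing $\lceil\log t\rceil$ selector taxa then commits to one instance $i$, since by \cref{obs:fviable} every selected edge-taxon needs all of its selector prey. The selector taxa belong to the vertex cover, so $\vc \le n + 2\log t$, and in-degrees are at most $2 + \log t$. To force exactly one selector per position, set $k = k + \binom{k}{2} + \lceil\log t\rceil$. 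Then give the selector taxa weight $1$ and raise the edge-taxon weights so that the edge-taxa dominate. The diversity target is reachable only with $\binom{k}{2}$ edge-taxa, which requires $k$ vertex-taxa, which in turn leaves room for only $\lceil \log t\rceil$ selectors, i.e.\ exactly one per bit. For example, edge-taxa get weight $2$, and $D = 2\binom{k}{2} + k + \lceil\log t\rceil$.

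The remaining checks are routine. Counting shows that achieving $D$ within budget $k$ forces $k$ vertex-taxa, $\binom{k}{2}$ edge-taxa all from a single index $i$, and one selector per bit; hence $G_i$ has a $k$-clique. Conversely, a $k$-clique in $G_i$ together with the selectors of $i$ is \fviable and reaches $D$. Both $D$ and $\vc$ are $\Oh(n^2 + \log t)$, as required.
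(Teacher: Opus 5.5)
Your overall plan matches the paper's: an OR-cross-composition from \CP with a shared vertex set, one edge-taxon per edge of each instance, a binary instance selector, then the transfer via \cref{lem:1toAlpha}\ref{itm:fewAddedVertices}. However, your concrete construction fails at exactly the mixing issue you flagged.

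The step ``$\binom{k}{2}$ edge-taxa require $k$ vertex-taxa'' presupposes that all saved edge-taxa come from a single instance. Edge-taxa of different instances may represent the same vertex pair. Because your selectors cost only one unit of budget, dropping one vertex-taxon pays for a second selector at one bit position. That unlocks a second instance and doubles the supply of edge-taxa on the same vertex set.

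Formally, let $a,c,s$ be the numbers of saved vertex-, edge- and selector-taxa and $L=\lceil\log t\rceil$. Your budget and target only yield $c\ge\binom{k}{2}$ and $a\le k-(s-L)$. Since $2\binom{k-1}{2}\ge\binom{k}{2}$ for $k\ge 4$, nothing excludes $s=L+1$.

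Concretely, take $k=4$ and $t=2$ equivalent instances $G_0,G_1$, each consisting of a triangle on $\{1,2,3\}$ plus isolated vertices. Your budget is $4+6+1=11$ and your target is $12+4+1=17$. Take the set formed by $1,2,3$, both selectors $s_{0,0},s_{0,1}$, and all six edge-taxa $v_{e,0},v_{e,1}$ with $e\subseteq\{1,2,3\}$. It has size $11$, diversity $3+2+12=17$, and is \fviable. So the composed instance is a yes-instance although neither input has a $4$-clique.

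The missing idea is to make every selector expensive in budget, so that no saving on vertex-taxa can buy an extra one. The paper gives each selector $k^2-1$ private source prey of weight $1$ and sets $k'=k^2(\ell+1)-\binom{k}{2}$ and $D'=k^2(\ell+1)$. The argument then runs as follows:
\begin{itemize}
\item Since $D'>k'$, some edge-taxon must be saved, so at least $\ell$ selectors are saved.
\item Each saved selector brings along its $k^2-1$ prey, so $k^2\cdot|S\cap B|\le|S|\le k'<k^2(\ell+1)$.
\item This forces exactly $\ell$ selectors, i.e., a single instance, and from there your clique counting goes through.
\end{itemize}
For your version of the counting, padding each selector with at least $k$ prey, with budget and target shifted accordingly, would already suffice. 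The padded food web is still bipartite, with vertex cover $V\cup B$ of size $n+2\ell$ and maximum in-degree $\max\{\ell+2,k^2-1\}$. Hence your transfer argument for general $\alpha$ remains valid.
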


\begin{figure}[t]
	\centering
	\begin{tikzpicture}[
		x=2.7mm,y=2.7mm,
		vertex/.style={fill,circle,inner sep=0pt,minimum size=2.5pt},
		dot/.style={fill,circle,inner sep=0pt,minimum size=1pt},
		edge/.style={draw},
		component/.style={fill=black!20,rounded corners=3pt,inner sep=6pt},
		vc/.style={draw,densely dashed,rounded corners=3pt,inner sep=3pt}
		]
		\path       node [vertex,label] (v0) {}
		++(0:1)   node [vertex]       (v1) {}
		++(0:1)   node [vertex]       (v2) {}
		++(0:1)   node [dot] {}
		++(0:0.5) node [dot] {}
		++(0:0.5) node [dot] {}
		++(0:1)   node [vertex]       (vn) {};
		\foreach \i in {0,...,3} {
			\foreach \k in {0,1} {
				\path (0:10)
				++(0:8*\i+4*\k) node [vertex,label=right:$\k_\i$] (b\i\k) {}
				++(-1.25,2)     node [vertex] (b\i\k-0) {}
				++(0:0.5)       node [vertex] (b\i\k-1) {}
				++(0:0.5)       node [vertex] (b\i\k-2) {}
				++(0:0.5)       node [dot] {}
				++(0:0.25)      node [dot] {}
				++(0:0.25)      node [dot] {}
				++(0:0.5)       node [vertex] (b\i\k-n) {};
				\path [edge] (b\i\k)
				foreach \j in {0,1,2,n} {
					edge (b\i\k-\j)
				};
				\path
				(b\i\k-0.west) +(0,3pt) coordinate (l)
				(b\i\k-n.east) +(0,3pt) coordinate (r);
				\draw [decorate,decoration=brace] (l) to node [above=3pt,inner sep=0pt] (k2) {$k^2 - 1$} (r);
			}
		}
		\foreach \i/\d in {0/0,1/10,2/20,t/37} {
			\path 
			(\d-1.375,-10) node [vertex] (e\i-0) {}
			++(0:1)   node [vertex] (e\i-1) {}
			++(0:1)   node [vertex] (e\i-2) {}
			++(0:1)   node [dot] {}
			++(0:0.5) node [dot] {}
			++(0:0.5) node [dot] {}
			++(0:1)   node [vertex] (e\i-n) {};
		}
		\path
		($(e2-n)!0.5!(et-0)$) node [dot] {}
		+(180:1)              node [dot] {}
		+(0:1)                node [dot] {};
		\draw [densely dotted] foreach \i in {0,2,t} {
			foreach \j in {0,1,2,n} {
				(e\i-\j) edge +(90:2) edge +(85:2) edge +(95:2)
			}
		}
		foreach \v in {v0,v2,b00,b11,b21,b31} {
			(\v) edge +(270:2) edge +(265:2) edge +(275:2)
		}
		foreach \v in {e1-0,e1-1,e1-n} {
			(\v) edge +(90:2) edge +(85:2) edge +(95:2)
		}
		(v1)  edge +(270:2) edge +(265:2)
		(vn)  edge +(270:2) edge +(265:2)
		(b01) edge +(275:2) edge +(270:2)
		(b10) edge +(275:2) edge +(270:2)
		(b20) edge +(275:2) edge +(270:2)
		(b30) edge +(275:2) edge +(270:2);
		\begin{scope}[decoration={
				markings,
				mark=at position .5 with {\arrowreversed{Stealth}}
			}]
			\path [edge,postaction={decorate}] (e1-2) .. controls +(75:7) and +(265:7) .. (b30);
			\path [edge,postaction={decorate}] (e1-2) .. controls +(95:7) and +(275:7) .. (v1);
			\path [edge,postaction={decorate}] (e1-2) .. controls +(90:7) and +(275:7) .. (vn);
			\path [edge,postaction={decorate}] (e1-2) .. controls +(85:7) and +(265:7) .. (b01);
			\path [edge,postaction={decorate}] (e1-2) .. controls +(80:7) and +(265:7) .. (b10);
			\path [edge,postaction={decorate}] (e1-2) .. controls +(75:7) and +(265:7) .. (b20);
			\path [edge,postaction={decorate}] (e1-2) .. controls +(75:7) and +(265:7) .. (b30);
		\end{scope}
		\path
		(e1-2.south) node [below=2pt,inner sep=0pt] (e) {$\{u,v\}$}
		(v1.north)   node [above=2pt,inner sep=0pt] (u) {$u$}
		(vn.north)   node [above=2pt,inner sep=0pt] (v) {$v$};
		\begin{scope}[on background layer]
			\node [component,fit={(b00) (b00-0) (b31-n) (k2.north)}] (B) {};
			\node [component,fit={(v0) (vn) (v0 |- k2.north)}] (V) {};
			\node [component,fit={(e0-0) (e0-n) (e0-0 |- e.south)}] (E0) {};
			\node [component,fit={(e1-0) (e1-n) (e1-0 |- e.south)}] (E1) {};
			\node [component,fit={(e2-0) (e2-n) (e2-0 |- e.south)}] (E2) {};
			\node [component,fit={(et-0) (et-n) (et-0 |- e.south)}] (Et) {};
		\end{scope}
		\node [vc,fit={(V.west |- v0) (B.east |- v0) (B.south) (u) (v)}] (vc) {};
		\node [above right] at (vc.south east) {$C$};
		\node [above right] at (E0.south east) {$E_0$};
		\node [above right] at (E1.south east) {$E_1$};
		\node [above right] at (E2.south east) {$E_2$};
		\node [above left] at (Et.south west) {$E_{t-1}$};
		\node [below left] at (V.north west)  {$V$};
		\node [below right] at (B.north east)  {$B'$};
	\end{tikzpicture}
	\caption{The cross-composition from $t \in \{9,\ldots,16\}$ clique instances
		to 1-PDD.  As illustrated for one vertex $\{u, v\} \in E_1$, every vertex
		$\{x, y\} \in E_i$, for $i \in [t-1]_0$, has in-edges from the endpoints $x, y
		\in V$ of the corresponding edge, and from those vertices among $0_0, 1_0,
		\ldots, 0_{\log t}, 1_{\log t}$ that encode the index $i$.}
	\label{fig:kernel}
\end{figure}
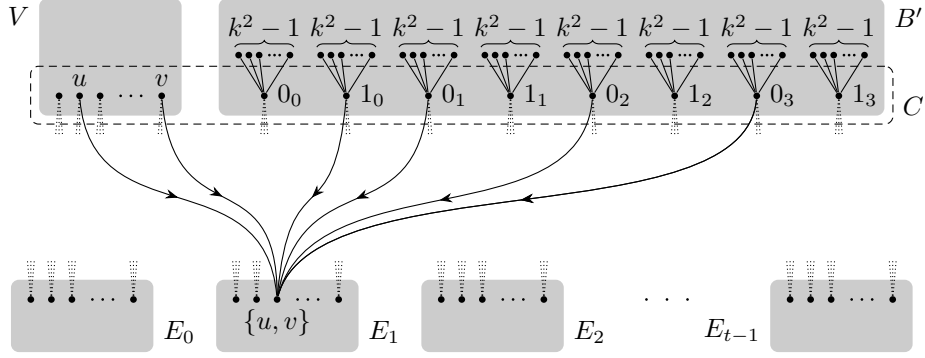

\begin{proof}
  We use an OR-cross composition from \CP to \fsPDD to prove the result for
  \fsPDD.  The result for all $\alpha \in (0,1]$ then follows by applying
  \cref{lem:1toAlpha}\ref{itm:fewAddedVertices}.

  As polynomial equivalence relation, we consider two \CP instances $(G_1 =
  (V_1,E_1), k_1)$ and $(G_2 = (V_2,E_2), k_2)$ to be equivalent if $|V_1| = |V_2|$
  and $k_1 = k_2$.  By renaming vertices, we can
  assume that $V_1 = V_2$ for equivalent instances.

  Given a set $\mathfrak{I} := \{(G_i = (V,E_i), k) \mid i\in [t-1]_0\}$ of $t$
  equivalent instances of \CP, let $n := \max_{i \in [t-1]_0} |G_i|$ and $N :=
  \sum_{i \in [t - 1]_0} |G_i|$. We construct the following instance
  $\Instance' = (\Tree', \Food', k', D')$ of \fsPDD from the set $\mathfrak{I}$ of instances:
  \todos{Why is the one instance with mathfrak and the other with mathcal. Can we make taht uniform? I would suggest mathcal.}
  
  The vertex set of $\Food'$ contains the vertices in~$V$, one vertex $e$ for
  every edge $e \in E := E_0 \uplus \cdots \uplus E_{t-1}$, and a set~$B$
  of~$2\ell$ special vertices $0_0, 1_0, \ldots, 0_{\ell-1}, 1_{\ell-1}$, where
  $\ell = \lceil \log t \rceil$.  Each such vertex~$b_i \in B$ has a set~$B_{b_i}$
  of~$k^2 - 1$ prey not connected to any other vertex.  Let
  $B' := B \cup \bigcup_{v \in B} B_v$.  For each edge $e = \{u,v\} \in E_b$,
  let~$b_{\ell-1}b_{\ell-2}\cdots b_0$ be the binary representation of~$b$.
  Then, let vertex~$e$ have $u, v \in V$ and $b_0, \ldots, b_{\ell-1} \in B$
  as prey in $\Food'$.
  (See \cref{fig:kernel} for an illustration.)
  
  Let $\Tree'$ be a star with root $\rho$, we set $\w(\rho e) = 2$, for each $e \in
  E$, and $\w(\rho v)$, for each $v \in V \cup B'$, and let $k' := k^2 \cdot
  (\ell + 1) - \binom{k}{2}$ and $D' := k^2 \cdot (\ell + 1)$.
  This completes the definition of~$\Instance'$.

  This construction of $\Instance'$ from $\mathfrak{I}$ can be carried out
  in time polynomial in $N$. $\Food'$~is bipartite, with bipartitions
  $V \cup B$ and $E \cup \bigcup_{v \in B} B_v$. Since
  $D' \in O(k^2 \log t) \subseteq O(n^2 \log t)$ and $C := V \cup B$
  is a vertex cover of $\Food'$ of size $O(n + \log
  t)$, to show that it constitutes an OR-cross composition from \CP to \fsPDD
  parameterized by $\vc + D'$, it remains to show that there exists a graph
  $G_b$, $b \in [t-1]_0$, which has a clique of size $k$ if and only if
  $\Instance'$ is a yes-instance of \fsPDD.

  First, assume that $S$ is a clique of size $k$ in $G_b$.  Then let $S'
  \subseteq V(\Food')$ be the set containing all vertices in $S$, all vertices
  in $E_b$ representing edges between vertices in $S$, and all vertices in
  $\bigcup_{i=0}^{\ell - 1} \{b_i\} \cup B_{b_i}$, where
  $b_{\ell-1}b_{\ell-2}\cdots b_0$ is the binary representation of $b$.  This
  set $S'$ is \fviable because all vertices in $V \cup \bigcup_{i=0}^{\ell - 1}
  B_{b_i} $ are sources, for each $b_i \in S'$, all vertices in~$B_{b_i}$ are
  in $S'$, and for each $\{u,v\} \in E_b \cap S'$, we have $u, v, b_0, \ldots,
  b_{\ell - 1} \in S'$.  The size of~$S'$ is~$\binom{k}{2} + k + k^2\ell =
  k^2(\ell + 1) - \binom{k}{2} = k'$, and its diversity is $2\binom{k}{2} + k +
  k^2\ell = k^2(\ell + 1) = D'$.  Thus, $S'$ is a solution to~\fsPDD on $\Instance'$.

  Next, assume that $S$ is a solution to \fsPDD on $\Instance'$.  Since $D' > k'$,
  $S$~and~$E$ intersect.  Thus, $|S \cap B| \ge \ell$, as every
  vertex in $E$ has $\ell$ prey in $B$. On the other hand, $S$ also
  contains all vertices in $B_v$, for every vertex $v \in S \cap B$.  Thus, $S$~contains
  at least $k^2 \cdot|S \cap B|$ taxa.  Since $|S| \le k' < k^2(\ell + 1)$, this shows that~$|S
  \cap B| = \ell$.  Since $S$ contains vertices in~$E$, there exists an index~$b
  \in [t - 1]_0$ such that $S$ and $E_b$ intersect. Thus, the $\ell$
  vertices in~$S \cap B$ must be the binary encoding
  $b_{\ell-1}b_{\ell-2}\cdots b_0$ of $b$, and we have $S \cap E = S \cap E_b$.
  Now, let $S'$ be the intersection of~$S$ and~$V \cup E_b$.  Then, the size of~$S'$ is $k^2\ell \le
  \binom{k}{2} + k$ and its diversity $\PDsub{\Tree'}(S) - k^2\ell \ge
  2\binom{k}{2} + k$.  Thus, $|S' \cap E_b| \ge \binom{k}{2}$ and $|S' \cap V|
  \le k$.  However, $S'$ contains the
  two vertices $u$ and~$v$, for every edge $\{u,v\} \in S' \cap E_b$.
  Thus, the constraints that $|S'| \le \binom{k}{2} +
  k$, $|S' \cap E_b| \ge \binom{k}{2}$, and $|S' \cap V| \le k$ imply that~$|S'
  \cap E_b| = \binom{k}{2}$ and $|S' \cap V| = k$. Therefore, $G_b[S' \cap V]$ is a
  clique of size~$k$ because~$S' \cap E_b$ contains $\binom{k}{2}$ edges
  between the $k$ vertices in~$S' \cap V$. Hence, $(G_b,k) \in \mathfrak{I}$
  is a yes-instance of \CP.
\end{proof}

\subsection{Kernel for Distance to Clique}

In this section, we show the existence of a kernel for \fPDD when parameterized
by the distance to clique~$d$ of the food web.

\begin{theorem}
	\label{pps:kernel-clique-dist}
  Given an instance~$\Instance = (\Tree,\Food,k,D)$ of~\fPDD and a
  set~$Z\subseteq X$, such that~$\Food-Z$ is a clique, in~$\Oh(n+m)$~time, we
  can compute an equivalent instance of~\fPDD with at most~$2 \cdot |Z|$ taxa.
\end{theorem}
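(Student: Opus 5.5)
The plan is to exploit the rigid structure that \cref{obs:fviable} imposes on the clique part, and then to replace everything except $Z$ and a few ``representative'' clique taxa by tree contractions together with adjustments of $k$ and $D$.

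\textbf{Structure of solutions.} Let $C := X \setminus Z$. Since $\Food$ is acyclic, $\Food[C]$ is an acyclic clique, that is, a transitive tournament. Hence $C$ has a unique topological order $c_1,\dots,c_{|C|}$ with $c_i c_j \in E_\Food$ for all $i<j$; computing it is a sort by in-degree within $C$, in $\Oh(n+m)$ time. By \cref{obs:fviable}, a \fviable set $S$ is closed under taking ancestors. Therefore $S\cap C$ is a prefix $\{c_1,\dots,c_j\}$ of this order.

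\textbf{Breakpoints.} For each $z\in Z$, let $p(z)$ be the largest index $i$ such that $c_i \in X_{\le z}$. Let $q(z)$ be the smallest index $i$ such that $z \in X_{\le c_i}$. Saving $z$ forces the prefix up to $p(z)$, and saving $c_{q(z)}$ forces $z$ (with $X_{\le z}$). These at most $2|Z|$ indices cut $c_1,\dots,c_{|C|}$ into at most $2|Z|+1$ maximal runs. Within a run, all clique taxa have the same dependencies on $Z$. The goal is to replace each run by $\Oh(1)$ taxa and then charge runs to vertices of $Z$ to reach the bound $2|Z|$.

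\textbf{Reduction rules on runs.} Two rules are planned, each proved safe by an exchange argument.

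\emph{Rule~1 (forced prefix).} Taxa that every nonempty relevant solution must contain can be deleted. Concretely, if a solution of size $k$ is assumed (as in \cref{cor:k-XP}), then every solution contains $c_1,\dots,c_{p}$ for $p=\min_z p(z)$, or it avoids $Z$ and is just a clique prefix. The forced prefix $F$ is removed by taking the Some-$F$-contraction of $\Tree$ and setting $k' := k-|F|$ and $D' := D-\PD(F)$. The edges removed by the contraction are exactly those already paid for by $F$, so diversities of supersets of $F$ are preserved.

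\emph{Rule~2 (collapsing a run).} Inside a run $R = c_a,\dots,c_b$ whose end is not a $q$-breakpoint, taking a proper prefix of $R$ gains nothing over swapping the taken run vertices for other available taxa. The intended exchange is that a solution ending strictly inside $R$ could stop at $c_{a-1}$ and spend the freed budget elsewhere. This exchange is not obviously weight-monotone, because of the tree weights, and it is the step I am least sure about. If it holds, $R$ can be merged into a single taxon whose PD contribution is $\PD(R)$ and whose cost is encoded by padding $k$. Since costs must be unit, the fallback is to merge $R$ via the All-$R$-contraction into one leaf and compensate the budget by attaching forced source taxa.

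\textbf{Main obstacle and fallback.} The hardest part is Rule~2 together with the exact count $2|Z|$. The budget constraint $|S|\le k$ interacts with prefix lengths inside a run, and the phylogenetic tree is not a star, so the PD of a prefix is not additive. My first attempt is to show the following via the exchange argument: in an optimal solution of size exactly $k$, the clique prefix either ends at a breakpoint or is the final ``greedy'' part. Only one run, the last one used, can be cut partially, and its cut position is determined by $k$ and the chosen $Z$-set. If this fails, the fallback is to keep, for each $z\in Z$, only the clique taxa $c_{p(z)}$ and $c_{q(z)}$ as representatives, contracting everything else and absorbing its PD and cost into $D$ and $k$. This gives at most $|Z| + |Z| = 2|Z|$ taxa up to the trivial case where $S\cap Z=\emptyset$, which can be solved directly in linear time by scanning prefixes and outputting a trivial yes- or no-instance. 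All steps are single passes over $\Food$ and $\Tree$, so the running time is $\Oh(n+m)$.
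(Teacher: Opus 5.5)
There is a genuine gap. Your Rule~2 is unproven, as you say yourself, and the fallback is unsound. The idea that makes the theorem short is missing: the budget alone pins down the clique prefix.

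Assume a solution $S$ has size exactly $k$, and write $S\cap C=\{c_1,\dots,c_j\}$. Then
$k-|Z|\le j=|S|-|S\cap Z|\le k$.
This has two consequences.
\begin{itemize}
\item Every such solution contains $c_{k-|Z|}$, and therefore all of $X_{\le c_{k-|Z|}}$. This set can be removed via the Some-contraction, lowering $k$ by $|X_{\le c_{k-|Z|}}|$ and $D$ by $\PD(X_{\le c_{k-|Z|}})$.
\item No \fviable set of size at most $k$ meets $X_{\ge c_{k+1}}$, because every $y$ in this set satisfies $\{c_1,\dots,c_{k+1}\}\subseteq X_{\le y}$. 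So this set can be deleted via the All-contraction.
\end{itemize}
What survives of $C$ is the window $c_{k-|Z|+1},\dots,c_k$, which has at most $|Z|$ taxa, plus at most $|Z|$ taxa of $Z$. These are exactly the paper's two reduction rules; no breakpoints, runs, or exchange arguments are needed. Your Rule~1 derives forcing from the dependencies rather than from the budget. It therefore only forces the prefix up to $\min_z p(z)$, which may be empty.

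Your Rule~2 and the fallback fail for concrete reasons.
\begin{itemize}
\item In \fPDD every taxon costs exactly one. A run cannot become a single taxon whose cost is ``encoded by padding $k$'', and attaching forced source taxa only adds taxa.
\item More fundamentally, the prefix length $k-|S\cap Z|$ of a solution depends on which $Z$-taxa it contains, and in general it is not a breakpoint. So clique taxa between breakpoints are neither forced nor forbidden, and contracting them while absorbing their PD and cost into $D$ and $k$ is unsound.
\item For example, let every $z\in Z$ be isolated in $\Food$. There are then no breakpoints at all. Yet a solution of size $k$ is some $W\subseteq Z$ together with $c_1,\dots,c_{k-|W|}$. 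Depending on the tree, each of $c_{k-|Z|+1},\dots,c_k$ may or may not be needed, so none of them can be contracted away. Your trivial case only settles $W=\emptyset$.
\item Your count is also off. Keeping $c_{p(z)}$ and $c_{q(z)}$ for every $z\in Z$, in addition to $Z$ itself, yields up to $3|Z|$ taxa, not $2|Z|$.
\end{itemize}
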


\begin{proof}
  We use two reduction rules, each of which we apply once, to construct the
  reduced instance $\Instance'$ from $\Instance$.  These rules are illustrated
  in \cref{fig:reduction-rules}.

  Since $\Food[X \setminus Z]$ is a clique and acyclic, it has a unique
  topological ordering~$\tau$.  For any~$x \in X \setminus Z$, we have $\{y \in
  X \setminus Z \mid \tau(y) \le \tau(x)\} \subseteq X_{\le x}$.  By
  \cref{obs:fviable}, any solution~$S$ to \fPDD on $\Instance$ that contains
  $x$ must also include $X_{\le x}$.  Thus, no vertex $x \in X \setminus Z$
  with $\tau(x) \ge k + 1$ can be part of such a solution, nor can any vertices
  in $X_{\ge x}$.  Moreover, if~$z$ is the vertex in $X \setminus Z$
  with~$\tau(z) = k + 1$, then $x \in X_{\ge z}$ and, therefore, $X_{\ge x}
  \subseteq X_{\ge z}$, for any vertex $x \in X \setminus Z$ with $\tau(x) \ge
  k + 1$.  This motivates the first reduction rule:

  \begin{rrule}
    \label{rr:d-1}
    If $z \in X \setminus Z$ satisfies $\tau(z) = k+1$, then remove all
    vertices in $X_{\ge z}$ from $\Food$ and replace $\Tree$ with the
    all-$X_{\ge z}$-contraction of~\Tree.
  \end{rrule}

  If $\Instance' = (\Tree', \Food', k, D)$ is the instance obtained by applying
  this rule to $\Instance$, then we just observed that none of the vertices
  removed from $\Food$ by this rule can be part of a \fviable set~$S$ in~$\Food$.
  Thus, $S$ is also \fviable in $\Food'$. Conversely, for every
  vertex $x$ of $\Food'$, all vertices that can reach $x$ in $\Food$ are in
  $\Food'$.  Thus, any \fviable set in $\Food'$ is also \fviable in $\Food$.
  This shows that $\Food$ and $\Food'$ have the same \fviable sets.  Since
  $\Tree'$ is the all-$X_{\ge z}$-contraction of~\Tree, any subset $S$
  of the vertices in $\Food'$ satisfies $\PDsub{\Tree}(S) = \PDsub{\Tree'}(S)$.
  Thus, $\Instance$ is a yes-instance if and only if $\Instance'$ is.  This
  proves that \cref{rr:d-1} is safe.

  Now consider any solution $S$ to \fPDD on $\Instance$.  We can assume that
  $|S| = k$~\cite{WeightedFW}. Since~$|S \cap Z| \le |Z|$, we have $|S \cap (X
  \setminus Z)| \ge k - |Z|$.  Thus, the first $k - |Z|$ vertices in $\tau$
  must be in $S$.  This motivates the second reduction rule:

  \begin{rrule}
    \label{rr:d-2}
    If $x\in X \setminus Z$ satisfies $\tau(x) = k - |Z|$, then remove all
    vertices in $X_{\le x}$ from $\Food$, replace $\Tree$ with the some-$X_{\le
    x}$-contraction of~\Tree, reduce~$k$ by~$|X_{\le x}|$, and
    reduce $D$ by~$\PD(X_{\le x})$.
  \end{rrule}

\begin{figure}[t]
	\centering
	\begin{tikzpicture}[
		x=6.5mm,y=6.5mm,
		vertex/.style={fill,circle,inner sep=0pt,minimum size=2.5pt},
		dot/.style={fill,circle,inner sep=0pt,minimum size=1pt},
		edge/.style={draw},
		component/.style={fill=black!20,rounded corners=3pt,inner xsep=9pt,inner ysep=6pt},
		part/.style={draw,densely dashed,rounded corners=3pt,inner xsep=6pt,inner ysep=9pt},
		vc/.style={draw,densely dashed,rounded corners=3pt,inner sep=3pt},
		every label/.append style={inner sep=0pt,label distance=3pt}
		]
		\path node [vertex] (a) {}
		+(270:3) node [vertex] (za) {}
		++(0:1) node [vertex] (b) {}
		++(0:3) node [vertex] (x) {}
		++(0:1.5) node [vertex] (x') {}
		+(270:3) node [vertex] (zx') {}
		++(0:1) node [vertex] (x'') {}
		++(0:3) node [vertex] (z') {}
		+(270:3) node [vertex] (zz') {}
		++(0:1.5) node [vertex] (z) {}
		++(0:1) node [vertex] (z'') {}
		++(0:3) node [vertex] (c) {}
		++(0:1) node [vertex] (d) {}
		+(270:3) node [vertex] (zd) {}
		(z'') +(1.5,-3) node [vertex] (zc) {}
		(b) +(1.5,-3) node [vertex] (zb) {}
		(x'') +(1.5,-3) node [vertex] (zx'') {}
		;
		\foreach \u/\v in {a/b,x/x',x'/x'',z'/z,z/z'',c/d} {
			\path [edge,-Stealth] (\u) -- (\v);
		}
		\draw [densely dotted] foreach \v in {a,b,x,x',x'',z',z,z''} {
			(\v) edge +(20:0.5) edge +(30:0.5) edge +(40:0.5)
		};
		\draw [densely dotted] foreach \v in {x,x',x'',z',z,z'',c,d} {
			(\v) edge +(160:0.5) edge +(150:0.5) edge +(140:0.5)
		};
		\draw
		(b) -- +(0:1)
		(x'') -- +(0:1)
		(z'') -- +(0:1);
		\draw [Stealth-] (z') -- +(180:1);
		\draw [Stealth-] (c)  -- +(180:1);
		\draw [Stealth-] (x)  -- +(180:1);
		\draw [-Stealth] (a) -- (zb);
		\draw [-Stealth] (zb) -- (x');
		\draw [-Stealth] (zb) -- (z);
		\draw [-Stealth] (zb) -- (x);
		\draw [-Stealth] (za) -- (x);
		\draw [-Stealth] (za) -- (b);
		\draw [-Stealth] (b) -- (zx');
		\draw [-Stealth] (x') -- (zx'');
		\draw [-Stealth] (zx') -- (z'');
		\draw [-Stealth] (x'') -- (zz');
		\draw [-Stealth] (zc) -- (c);
		\draw [-Stealth] (z'') -- (zc);
		\draw [-Stealth] (c) -- (zd);
		\path foreach \v in {b,x'',z''} {
			(\v)
			+(0:1.25) node [dot] {}
			+(0:1.5) node [dot] {}
			+(0:1.75) node [dot] {}
		};
		\node [inner sep=0pt,above left,yshift=12pt] (lx) at (x) {$\tau(x) = k - |Z|$};
		\node [inner sep=0pt,above right,yshift=12pt] (lz) at (z) {$\tau(z) = k + 1$};
		\draw [ultra thin] (x) -- ([xshift=-6pt]lx.south east) (z) -- ([xshift=6pt]lz.south west);
		\begin{scope}[on background layer]
			\node [component,fit={(a) (d) (lx) (lz)},label=left:$X \setminus Z$] {};
			\node [component,fit={(za) (zd)},label=left:$Z$] {};
		\end{scope}
		\node [part,fit={(a) (za) (x |- lx.north)},label=above:{Removed by \cref{rr:d-1}}] {};
		\node [part,fit={(x') (zx') (z' |- lx.north)},label=above:$\Food'$] {};
		\node [part,fit={(d) (zd) (z |- lx.north)},label=above:{Removed by \cref{rr:d-2}}] {};
	\end{tikzpicture}
	\caption{Illustration of the reduction rules used in the proof of
		\cref{pps:kernel-clique-dist}.}
	\label{fig:reduction-rules}
\end{figure}
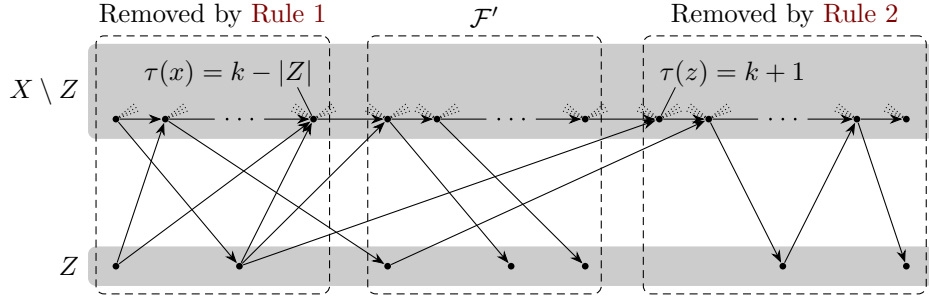

  Let $\Instance' = (\Tree', \Food', k', D')$ be the instance obtained by
  applying~\Cref{rr:d-2} to~$\Instance$.  We observed that any solution $S$ to
  \fPDD on $\Instance$ must contain $x$ and, therefore, all vertices in $X_{\le
  x}$.  Thus, if $S' := S \setminus X_{\le x}$, then $|S'| \le k - |X_{\le x}|
  = k'$, and $S'$ is~\fviable in $\Food$.  Since $\Tree'$ is the
  some-$X_{\le x}$-contraction of~\Tree, we have $\PDsub{\Tree'}(S') =
  \PD(S) - \PD(X_{\le x}) \ge D'$.  Thus, $S'$ is a solution to \fPDD on~$\Instance'$.

  Conversely, if $S'$ is a solution to \fPDD on $\Instance'$, then $S := S'
  \cup X_{\le x}$ satisfies~$|S| \le |S'| + |X_{\le x}| \le k$ and $\PD(S) =
  \PDsub{\Tree'}(S') + \PD(X_{\le x}) \ge D$. For every vertex $z \in S$, every
  vertex in $X_{\le z}$ is either in $S'$ (if it is in $\Food'$) or in $X_{\le
  x}$ (if it is not in $\Food'$). Thus, $S$ is a solution to \fPDD on
  $\Instance$.  This proves that $\Instance$ is a yes-instance if and only if
  $\Instance'$ is a yes-instance, that is, \cref{rr:d-2} is safe.

  We obtain the desired instance $\Instance'$ from $\Instance$ by
  a single application of \Cref{rr:d-1,rr:d-2}, each.
  These reductions can be implemented in $O(n + m)$ time
  because all that is required is to compute a topological ordering~$\tau$
  of~$\Food[X \setminus Z]$, to identify the vertices $z$ and $x$ with~$\tau(z) =
  k + 1$ and~$\tau(x) = k - |Z|$, and then removing the vertices in $X_{\ge z}$
  and $X_{\le x}$ from \Food, along with counting these vertices and adjusting
  $k$ and $D$ accordingly.  Since we have shown that \cref{rr:d-1,rr:d-2} are
  safe, $\Instance$ is a yes-instance if and only if $\Instance'$ is a
  yes-instance.  Thus, to finish the proof of the theorem, we need to prove
  that $\Food'$ has at most $2 \cdot |Z|$ vertices.  However, $\Food'$ contains only
  those vertices $y$ in $X \setminus Z$ that satisfy $k - |Z| < \tau(y) \le k$.
  There are exactly $|Z|$ such vertices.  In addition, $\Food'$~contains at
  most all vertics in $Z$.  This is an additional $|Z|$ vertices, so $\Food'$
  has at most $2 \cdot |Z|$ vertices.
\end{proof}

\section{Discussion}

\label{sec:discussion}

In this paper, we considered the problem of saving $k$ species of maximum
phylogenetic diversity subject to the constraint that at least an
$\alpha$-fraction of the prey of every saved species is also saved. This
problem had been studied before for~$\alpha \in \{1/2,1\}$, while we allow any
value~$\alpha \in (0,1]$ in this paper.

Our main contributions are \Wh{1}-hardness results for \asPDD
parameterized by the solution size $k$, the diversity threshold $D$, and their
complements, as well as a proof that~\asPDD does not admit
a polynomial kernel with respect to $D$ plus the vertex cover number of the
food web, which implies that \asPDD does not admit a polynomial kernel with
respect to a range of structural parameters of the food web visualized in
\cref{fig:fPDD-results}.  Together, these results paint a fairly comprehensive
picture of the parameterized complexity landscape of~\asPDD.

The most interesting open question is whether \asPDD admits a polynomial kernel
with respect to the food web's scanwidth, or at least its directed acyclic
cutwidth.  These questions are marked open in \cref{fig:fPDD-results}.

Our hardness results for \asPDD were obtained via reductions from~\fsPDD and~\sPDD.
A natural question is for which structural parameters of the food web
\asPDD is harder to solve than \fsPDD or \sPDD.
For example, \hsPDD is \Wh{1}-hard with respect to the treewidth of the food
web~\cite{twvssw}, while for the same parameter, \sPDD~\cite{PDD}
and~\fsPDD~\cite{WeightedFW} are \FPT.

\bibliography{ref}

@inproceedings{PDD,
	title = {{Maximizing Phylogenetic Diversity under Ecological Constraints: A Parameterized Complexity Study}},
	author = {Komusiewicz, Christian and Schestag, Jannik},
	booktitle = {Proceedings of the 44th IARCS Annual Conference on Foundations of Software Technology and Theoretical Computer Science (FSTTCS~2024)},
	year = {2024},
	pages = {28:1--28:18},
	organization = {Schloss-Dagstuhl-Leibniz Zentrum f{\"u}r Informatik},
	doi = {10.4230/LIPIcs.FSTTCS.2024.28}
}

@inproceedings{WeightedFW,
	title = {{Weighted Food Webs Make Computing Phylogenetic Diversity So Much Harder}}, 
	author = {Jannik Schestag},
	booktitle = {Proceedings of the 51st International Conference on Current Trends in Theory and Practice of Computer Science (SOFSEM 2026)},
	TDODOpages = {},
	year = {2026},
	organization = {Springer},
	eprint = {2510.05911},
	archivePrefix = {arXiv},
	TODOdoi = {}
}

@unpublished{twvssw,
	title = {{The First Known Problem That Is FPT with Respect to Node Scanwidth but Not Treewidth}},
	author = {Jannik Schestag and Norbert Zeh},
	year = {2026},
	archivePrefix={arXiv},
	jorunal={arXiv preprint},
	eprint = {2602.06903},
	TODOdoi = {}
}

@article{faller,
	title = {{Optimizing Phylogenetic Diversity with Ecological Constraints}},
	author = {Faller, Be{\'a}ta and Semple, Charles and Welsh, Dominic},
	journal = {Annals of Combinatorics},
	volume = {15},
	number = {2},
	pages = {255--266},
	year = {2011},
	publisher = {Springer},
	doi = {10.1007/s00026-011-0093-6}
}

@article{moulton,
	title = {{Optimizing phylogenetic diversity under constraints}},
	author = {Moulton, Vincent and Semple, Charles and Steel, Mike},
	journal = {Journal of Theoretical Biology},
	volume = {246},
	number = {1},
	pages = {186--194},
	year = {2007},
	publisher = {Elsevier},
	doi = {10.1016/j.jtbi.2006.12.021}
}

@article{spillner,
	title = {{Computing Phylogenetic Diversity for Split Systems}},
	author = {Spillner, Andreas and Nguyen, Binh T. and Moulton, Vincent},
	journal = {IEEE/ACM Transactions on Computational Biology and Bioinformatics},
	volume = {5},
	number = {2},
	pages = {235--244},
	year = {2008},
	publisher = {IEEE},
	doi = {10.1109/tcbb.2007.70260}
}

@inproceedings{MAPPDviability,
	title = {{Parameterized Algorithms for Diversity of Networks with Ecological Dependencies}},
	author = {Jones, Mark and Schestag, Jannik},
	booktitle = {Proceedings of the 20th International Symposium on Parameterized and Exact Computation (IPEC 2025)},
	pages = {11:1--11:21},
	year = {2025},
	organization = {Schloss-Dagstuhl-Leibniz Zentrum f{\"u}r Informatik},
	doi = {10.4230/LIPIcs.IPEC.2025.11}
}

@article{GNAP,
	title = {{A Multivariate Complexity Analysis of the Generalized Noah's Ark Problem}},
	author = {Komusiewicz, Christian and Schestag, Jannik},
	journal = {Discrete Applied Mathematics},
	volume = {382},
	pages = {137--154},
	year = {2025},
	publisher = {Elsevier},
	doi = {10.1016/j.dam.2025.11.037}
}

@article{TimePD,
	title = {{Maximizing Phylogenetic Diversity under Time Pressure: Planning with Extinctions Ahead}},
	author = {Jones, Mark and Schestag, Jannik},
	journal = {arXiv preprint arXiv:2403.14217},
	year = {2024},
	eprint = {2403.14217},
	archivePrefix = {arXiv},
	TODOdoi = {}
}

@article{WickeFischer2018,
	title = {Phylogenetic diversity and biodiversity indices on phylogenetic networks},
	author = {Wicke, Kristina and Fischer, Mareike},
	journal = {Mathematical Biosciences},
	volume = {298},
	pages = {80-90},
	year = {2018},
	doi = {10.1016/j.mbs.2018.02.005}
}

@inproceedings{van2024maximizing,
	title = {{Phylogenetic Network Diversity Parameterized by Reticulation Number and Beyond}},
	author = {van Iersel, Leo and Jones, Mark and Schestag, Jannik and Scornavacca, Celine and Weller, Mathias},
	booktitle = {Proceedings of the 22nd RECOMB International Workshop on Comparative Genomics (RECOMB-CG 2025)},
	pages = {107--130},
	year = {2025},
	organization = {Springer},
	doi = {10.1007/978-3-031-94928-9\_7}
}

@inproceedings{van2025average,
	title = {{Average-Tree Phylogenetic Diversity of Networks}},
	author = {van Iersel, Leo and Schestag, Jannik and Jones, Mark and Scornavacca, Celine and Weller, Mathias},
	booktitle = {Proceedings of the 25th International Workshop on Algorithms in Bioinformatics (WABI 2025)},
	pages = {14:1--14:21},
	year = {2025},
	organization = {Schloss Dagstuhl--Leibniz-Zentrum f{\"u}r Informatik},
	doi = {10.4230/LIPIcs.WABI.2025.15}
}

@article{bordewichNetworks,
	title = {On the complexity of optimising variants of phylogenetic diversity on phylogenetic networks},
	author = {Bordewich, Magnus and Semple, Charles and Wicke, Kristina},
	journal = {Theoretical Computer Science},
	volume = {917},
	pages = {66--80},
	year = {2022},
	publisher = {Elsevier},
	doi = {10.1016/j.tcs.2022.03.012}
}

@inproceedings{MAPPD,
	title = {{How Can We Maximize Phylogenetic Diversity? Parameterized Approaches for Networks}},
	author = {Jones, Mark and Schestag, Jannik},
	booktitle = {Proceedings of the 18th International Symposium on Parameterized and Exact Computation (IPEC 2023)},
	pages = {30:1--30:12},
	year = {2023},
	organization = {Schloss-Dagstuhl-Leibniz Zentrum f{\"u}r Informatik},
	doi = {10.4230/LIPIcs.IPEC.2023.30}
}

@article{steel,
	title = {{Phylogenetic Diversity and the Greedy Algorithm}},
	author = {Steel, Mike},
	journal = {Systematic Biology},
	volume = {54},
	number = {4},
	pages = {527--529},
	year = {2005},
	publisher = {Society of Systematic Zoology},
	doi = {10.1080/10635150590947023}
}

@article{Pardi2005,
	title = {{Species Choice for Comparative Genomics: Being Greedy Works}},
	author = {Pardi, Fabio and Goldman, Nick},
	journal = {PLoS Genetics},
	year = {2005},
	volume = {1},
	doi = {10.1371/journal.pgen.0010071}
}

@article{pardi07,
	title = {{Resource-Aware Taxon Selection for Maximizing Phylogenetic Diversity}},
	author = {Pardi, Fabio and Goldman, Nick},
	journal = {Systematic Biology},
	volume = {56},
	number = {3},
	pages = {431--444},
	year = {2007},
	publisher = {Society of Systematic Zoology},
	doi = {10.1080/10635150701411279}
}

@article{hartmann,
	title={{Maximizing Phylogenetic Diversity in Biodiversity Conservation: Greedy Solutions to the Noah's Ark Problem}},
	author={Hartmann, Klaas and Steel, Mike},
	journal={Systematic Biology},
	volume={55},
	number={4},
	pages={644--651},
	year={2006},
	publisher={Society of Systematic Zoology},
	doi = {10.1080/10635150600873876}
}

@article{bordewich2012budgeted,
	title={{Budgeted Nature Reserve Selection with diversity feature loss and arbitrary split systems}},
	author={Bordewich, Magnus and Semple, Charles},
	journal={Journal of Mathematical Biology},
	volume={64},
	number={1},
	pages={69--85},
	year={2012},
	publisher={Springer},
	doi = {10.1007/s00285-011-0405-9}
}

@inproceedings{berry,
	title={{Scanning Phylogenetic Networks Is NP-hard}},
	author={Berry, Vincent and Scornavacca, Celine and Weller, Mathias},
	booktitle={Proceedings of the 46th International Conference on Current Trends in Theory and Practice of Computer Science (SOFSEM 2020)},
	pages={519--530},
	year={2020},
	organization={Springer},
	doi = {10.1007/978-3-030-38919-2\_42}
}

@article{FAITH1992,
	title = {{Conservation evaluation and phylogenetic diversity}},
	author = {Faith, Daniel P.},
	journal = {Biological Conservation},
	volume = {61},
	number = {1},
	pages = {1-10},
	year = {1992},
	issn = {0006-3207},
	doi = {10.1016/0006-3207(92)91201-3},
}

@article{karanth2019phylogenetic,
	title={Phylogenetic diversity as a measure of biodiversity: pros and cons},
	author={Karanth, K. Praveen and Gautam, Srishti and Arekar, Kunal and Divya, B.},
	journal={Journal of the Bombay Natural History Society},
	volume={116},
	pages={53--61},
	year={2019},
	doi = {10.17087/jbnhs/2019/v116/120848}
}

@article{vellend2011measuring,
	title={Measuring phylogenetic biodiversity},
	author={Vellend, Mark and Cornwell, William K. and Magnuson-Ford, Karen and Mooers, Arne {\O}.},
	journal={Biological diversity: frontiers in measurement and assessment},
	volume={14},
	pages={194--207},
	year={2011},
	publisher={Oxford University Press Oxford}
}

@article{faith2016pd,
	title={{The PD Phylogenetic Diversity Framework: Linking Evolutionary History to Feature Diversity for Biodiversity Conservation}},
	author={Faith, Daniel P.},
	journal={Biodiversity Conservation and Phylogenetic Systematics: Preserving our evolutionary heritage in an extinction crisis},
	pages={39--56},
	year={2016},
	publisher={Springer International Publishing},
	doi = {10.1007/978-3-319-22461-9\_3}
}

@article{EDGE,
	title={{EDGE} of existence and phylogenetic diversity},
	author={Faith, Daniel P.},
	journal={Animal Conservation},
	volume={22},
	pages={537--538},
	year={2019},
	doi = {10.1111/acv.12552}
}

@article{scotti,
	title={{Weighting, scale dependence and indirect effects in ecological networks: A comparative study}},
	author={Scotti, Marco and Podani, J{\'a}nos and Jord{\'a}n, Ferenc},
	journal={Ecological Complexity},
	volume={4},
	number={3},
	pages={148--159},
	year={2007},
	publisher={Elsevier},
	doi = {10.1016/j.ecocom.2007.05.002}
}

@article{yang2024analysis,
	title={Analysis of keystone species in a quantitative network perspective based on stable isotopes},
	author={Yang, Ruijing and Feng, Minquan and Liu, Zimeng and Wang, Xuyan and Qu, Zili},
	journal={Ecological Complexity},
	volume={59},
	pages={101092},
	year={2024},
	publisher={Elsevier},
	doi = {10.1016/j.ecocom.2024.101092}
}

@article{lieberman,
	title={Evolutionary dynamics on graphs},
	author={Lieberman, Erez and Hauert, Christoph and Nowak, Martin A.},
	journal={Nature},
	volume={433},
	number={7023},
	pages={312--316},
	year={2005},
	publisher={Nature Publishing Group UK London},
	doi = {10.1038/nature03204}
}

@book{pimm1982food,
	title={Food webs},
	author={Pimm, Stuart L.},
	year={1982},
	publisher={Springer},
	doi = {10.1007/978-94-009-5925-5\_1}
}

@article{cardinale2012biodiversity,
	title={Biodiversity loss and its impact on humanity},
	author={Cardinale, Bradley J. and Duffy, J. Emmett and Gonzalez, Andrew and others},
	journal={Nature},
	volume={486},
	number={7401},
	pages={59--67},
	year={2012},
	publisher={Nature Publishing Group},
	doi = {10.1038/nature11148}
}

@article{barnosky2011has,
	title={{Has the Earth’s sixth mass extinction already arrived?}},
	author={Barnosky, Anthony D. and Matzke, Nicholas and Tomiya, Susumu and others},
	journal={Nature},
	volume={471},
	number={7336},
	pages={51--57},
	year={2011},
	publisher={Nature Publishing Group},
	doi = {10.1038/nature09678}
}

@article{cowie2022sixth,
	title={{The Sixth Mass Extinction: fact, fiction or speculation?}},
	author={Cowie, Robert H. and Bouchet, Philippe and Fontaine, Beno{\^\i}t},
	journal={Biological Reviews},
	volume={97},
	number={2},
	pages={640--663},
	year={2022},
	publisher={Wiley Online Library},
	doi = {10.1111/brv.12816}
}

@book{cygan,
	title = {Parameterized {A}lgorithms},
	author = {Cygan,Marek and Fomin, Fedor V. and Kowalik, Lukasz and Lokshtanov, Daniel and Marx, D{\'{a}}niel and Pilipczuk, Marcin and Pilipczuk, Michal and Saurabh, Saket},
	publisher = {Springer},
	year = {2015},
	doi = {10.1007/978-3-319-21275-3}
}

@book{downeybook,
	title = {Fundamentals of {P}arameterized {C}omplexity},
	author = {Downey, Rodney G. and Fellows, Michael R.},
	series = {Texts in Computer Science},
	publisher = {Springer},
	year = {2013},
	doi = {10.1007/978-1-4471-5559-1}
}

@unpublished{graphparameters,
	title = {{The Graph Parameter Hierarchy}},
	author = {Sorge, Manuel and Weller, Mathias and Foucaud, Florent and Such{\`y}, Ond{\v{r}}ej and Ochem, Pascal and Vatshelle, Martin and Woeginger, Gerhard J.},
	note = {{URL}: \url{https://manyu.pro/assets/parameter-hierarchy.pdf}},
	year = {2020}
}

@book{diestel2025graph,
	title={{Graph Theory}},
	author={Diestel, Reinhard},
	volume={173},
	year={2025},
	publisher={Springer Nature},
	doi = {10.1007/978-3-662-70107-2}
}
\bibliographystyle{plainurl}

\end{document}